\documentclass[10pt, journal]{IEEEtran}
\usepackage{epsfig,amsmath,hhline}

\usepackage{times}

\usepackage{latexsym}
\usepackage{amsmath}
\usepackage{nccbbb}
\usepackage{amsthm}
\usepackage{url}
\usepackage{dsfont}
\usepackage{psfrag}
\usepackage{epsfig}

\newtheorem{theorem}{Theorem}
\newtheorem{fact}{Fact}
\newtheorem{lemma}{Lemma}
\newtheorem{corollary}{Corollary}
\newtheorem{proposition}{Proposition}

\theoremstyle{definition}
\newtheorem{defn}{Definition}
\newtheorem{example}{Example}

\theoremstyle{plain}
\newtheorem{assumption}{Assumption}
\newtheorem{condition}{Condition}

\theoremstyle{remark}

\newcommand{\ie}{\textsl{i.e.} }

\usepackage{graphicx}
\usepackage{amssymb}
\usepackage{epstopdf}
\DeclareGraphicsRule{.tif}{png}{.png}{`convert #1 `dirname #1`/`basename #1 .tif`.png}

\title{Adaptive Policies for Scheduling with Reconfiguration Delay: An End-to-End Solution for All-Optical Data Centers}

\author{Chang-Heng~Wang, \IEEEmembership{Student Member,~IEEE,}  Tara~Javidi, \IEEEmembership{Member,~IEEE,}
%
}


\begin{document}

\maketitle

\begin{abstract}
All-optical switching networks have been considered a promising candidate for the next generation data center networks thanks to its scalability in data bandwidth and power efficiency. However, the bufferless nature and the nonzero reconfiguration delay of optical switches remain great challenges in deploying all-optical networks. This paper considers the end-to-end scheduling for all-optical data center networks with no in-network buffer and nonzero reconfiguration delay. A framework is proposed to deal with the nonzero reconfiguration delay. The proposed approach constructs an adaptive variant of any given scheduling policy. It is shown that if a scheduling policy guarantees its schedules to have schedule weights close to the MaxWeight schedule (and thus is throughput optimal in the zero reconfiguration regime), then the throughput optimality is inherited by its adaptive variant (in any nonzero reconfiguration delay regime). As a corollary, a class of adaptive variants of the well known MaxWeight policy is shown to achieve throughput optimality without prior knowledge of the traffic load. Furthermore, through numerical simulations, the simplest such policy, namely the Adaptive MaxWeight (AMW), is shown to exhibit better delay performance than all prior work.

\end{abstract}

\begin{IEEEkeywords}
Reconfiguration delay, scheduling, throughput optimality
\end{IEEEkeywords}

\section{Introduction}

Massive data centers serve as the basis of a huge variety of online services and applications nowadays. The underlying network interconnects face increasingly stringent performance requirements such as high data bandwidth and low latency. All-optical networks emerge as a promising candidate for the next generation data center networks and benefit from two technical breakthroughs: (1) the advancement of dense wavelength division multiplexing (DWDM) in optical fibers and (2) the optical switches substituting traditional electronic switches, which typically incur high cost and high power demand when supporting high data bandwidth. Due to the inherently bufferless nature of optical switches, however, the data transmission in an all-optical network will need to be conducted in an end-to-end fashion. This network topology can be viewed as a single crossbar interconnecting the end hosts, except that the full bisection bandwidth is not always guaranteed. In other words, an efficient utilization of the all-optical network depends on efficient centralized controller that can schedule the end-to-end transmissions.

The main challenge of the scheduling for optical networks, as opposed to the traditional electronic crossbar fabric scheduling, comes from the fact that optical networks typically exhibit a nonzero {\em reconfiguration delay} upon changing the circuit configuration. This reconfiguration delay comprises of two factors: (1) The reconfiguration delay of the optical switches since candidate technologies (such as binary MEMS~\cite{nistica}, ROADM~\cite{ROADM}) typically involve mechanically directing laser beams and thus require a certain time to finish a reconfiguration. (2) The time for the control plane to control/communicate with the optical switches along the intended circuit. During the circuit reconfiguration, reliable packet transmission could not be supported in the network. For practical circuit switch technologies, this reconfiguration delay is significantly longer than the link-layer inter-frame gap.  For example, the reconfiguration delay for state of the art binary MEMS is $2-20 \ \mu$s~\cite{nistica}, which is significantly larger than the inter-frame gap of 9.6 ns (for 10 Gigabit Ethernet). This nonzero reconfiguration delay motivates the need for scheduling policies that explicitly account for the reconfiguration delay.

With the presence of the reconfiguration delay, it is well known that the scheduling policy should avoid reconfiguring the circuit too frequently. Most of the existing work for scheduling policies with nonzero reconfiguration delay (\cite{adaptive_batch}, \cite{mordia}, \cite{VFMW}) tend to provision the future schedules for a certain time period (usually a long time period). These scheduling policies, classified as ``quasi-static'' policies, can perform poorly since their schedules may depend on severely out-dated queue length information. In contrast, ``dynamic'' policies determine each schedule based on the most up-to-date queue length information. An example subclass of dynamic policies is the frame-based policies such as the Fixed-Frame MaxWeight (FFMW) \cite{FFMW}, which has good delay performance if the arrival statistics is known in advance. However, the problem for frame-based policies is that the duration of the frame is set in a manner to ensure that the loss of duty cycle due to the reconfiguration delay is negligible relative to the traffic load. In other words, these policies require prior knowledge of the arrival statistics and the value of reconfiguration delay to ensure the stability of the network.

In this paper, we propose a novel class of scheduling policies called adaptive policies for scheduling with nonzero reconfiguration delay. The adaptive policies make scheduling decisions every time slot and determines both the schedule and the time to reconfigure the schedule based on the most recent queue length information. The main idea behind the adaptive policy is to keep the current schedule as long as it is ``good enough'' so that the schedule reconfiguration only occurs when it is necessary. To be more specific, the construction of an adaptive policy requires a scheduling policy $\pi$ and a sublinear function $g$. At each time slot $t$, the policy $\pi$ proposes a schedule $\mathbf{\Pi}(t)$, and the adaptive policy computes the schedule weight difference between $\mathbf{\Pi}(t)$ and the current schedule. If the weight difference is less than a threshold dependent on the function $g$, then the current schedule is considered good enough; otherwise the schedule is reconfigured to $\mathbf{\Pi}(t)$. The constructed policy is called the $g$-adaptive variant of $\pi$ where the function $g$ measures the reluctance of changing the schedule, and is thus referred to as the {\em hysteresis function}. We show in this paper that if the original policy $\pi$ has schedule weight that is close to the MaxWeight schedule (either in deterministic or expected sense), which guarantees throughput optimality under a zero reconfiguration delay, then its $g$-adaptive variant achieves throughput optimality for any fixed reconfiguration delay. Note that this stability guarantee does not require prior knowledge on the arrival statistics or the value of reconfiguration delay to stabilize the network, as opposed to the frame-based policies, such as the FFMW policy.

The rest of the paper is organized as follows. 
In the next section, the network model and the notion of stability are introduced. In section \ref{main}, we introduce the class of adaptive policies and analyze its throughput as well as certain delay bounds. We then explain the mechanism of the adaptive policies and compare them with scheduling policies in the literature qualitatively in section \ref{related}. Section \ref{simulation} gives the performance evaluation and the comparison between scheduling policies through simulations. Finally, we conclude with a summary and some future directions in section \ref{conclusion}.

\section{System Model}\label{model}
\subsection{The Network Model}

Consider a set of $N$ top of rack (ToR) switches, labeled by $\{1, 2, \dots, N\}$, which are interconnected by an optical switched network, as shown in Fig. \ref{fig_model}. Each ToR switch can serve both as a source and a destination simultaneously. We assume no buffering in the optical network, hence all the buffering occurs in the edge of the network, i.e. within the ToR switches. Each ToR switch maintains $N - 1$ edge queues (either physically or virtually), which are denoted by $Q_{ij}$, where $j \in \{1, 2, \dots, N\} \backslash \{i\}$. Packets going from the ToR switch $i$ to $j$ are enqueued in the edge queue $Q_{ij}$ before transmission.

The system considered is assumed to be time-slotted, with the time indexed as $t \in \bbbn_{+} = \{0, 1, 2, \dots \}$. Each slot duration is the transmission time of a single packet, which is assumed to be a fixed value. Let $A_{ij}(t)$ and $D_{ij}(t)$ be the number of packets arrived at and departed from queue $Q_{ij}$ at time $t$, respectively. Let $L_{ij}(t)$ be the number of packets in the edge queue $Q_{ij}$ at the beginning of the time slot $t$. For ease of notation, we write $\mathbf{A}(t) = [A_{ij}(t)], \mathbf{D}(t) = [D_{ij}(t)], \mathbf{L}(t) = [L_{ij}(t)]$, where $\mathbf{A}(t), \mathbf{D}(t), \mathbf{L}(t) \in \bbbn_+^{N \times N}$. We adopt the convention that the packet arrivals occur at the end of each time slot, and the queue dynamics is then given by
\[
L_{ij}(t+1) = L_{ij}(t) - D_{ij}(t) + A_{ij}(t)
\]

We assume the arrival processes $A_{ij}(t)$ to be independent over $i, j \in \{1, 2, \dots, N\}, i\neq j$. Each process $A_{ij}(t)$ is i.i.d. over time slots. 
We call the mean of $A_{ij}(t)$ as the traffic rate $\lambda_{ij} = \mathbb{E}\{A_{ij}(0)\}$, and define the traffic rate matrix as $\boldsymbol{\lambda} = [\lambda_{ij}] \in \bbbr^{N \times N}$.

\subsection{Schedules and Scheduling Policies}

Let $\mathbf{S}(t) \in \{0, 1\}^{N \times N}$ denote the schedule at time $t$, which indicates the optical circuits established between the ToR switches. We set $S_{ij}(t) = 1$ if an optical circuit from ToR $i$ to ToR $j$ exists at time $t$, and $S_{ij}(t) = 0$ otherwise. The feasible schedules for the network are determined by the network topology and physical constraints on simultaneous data transmissions. We let $\mathcal{S}$ denote the set of all feasible schedules, \ie $\mathbf{S}(t) \in \mathcal{S}$ for all $t$. For instance, it's common to assume at any $t$ each ToR can only transmit to at most one destination, and can only receive from at most one source, i.e. $\sum_i S_{ij}(t) \leq 1, \sum_j S_{ij}(t) \leq 1$. Under such assumption $\mathcal{S} \subset \mathcal{P}$, where $\mathcal{P}$ is the set of all permutation matrices, \ie if $\mathbf{S}(t)$ has $N$ circuit connections it is a permutation matrix. Note the permutation matrices might not all be feasible in a network; however, if all such schedules are in the feasible schedule set $\mathcal{S}$, we say the network topology is non-blocking.

\begin{figure}
\centering
\includegraphics[width = 3.0in]{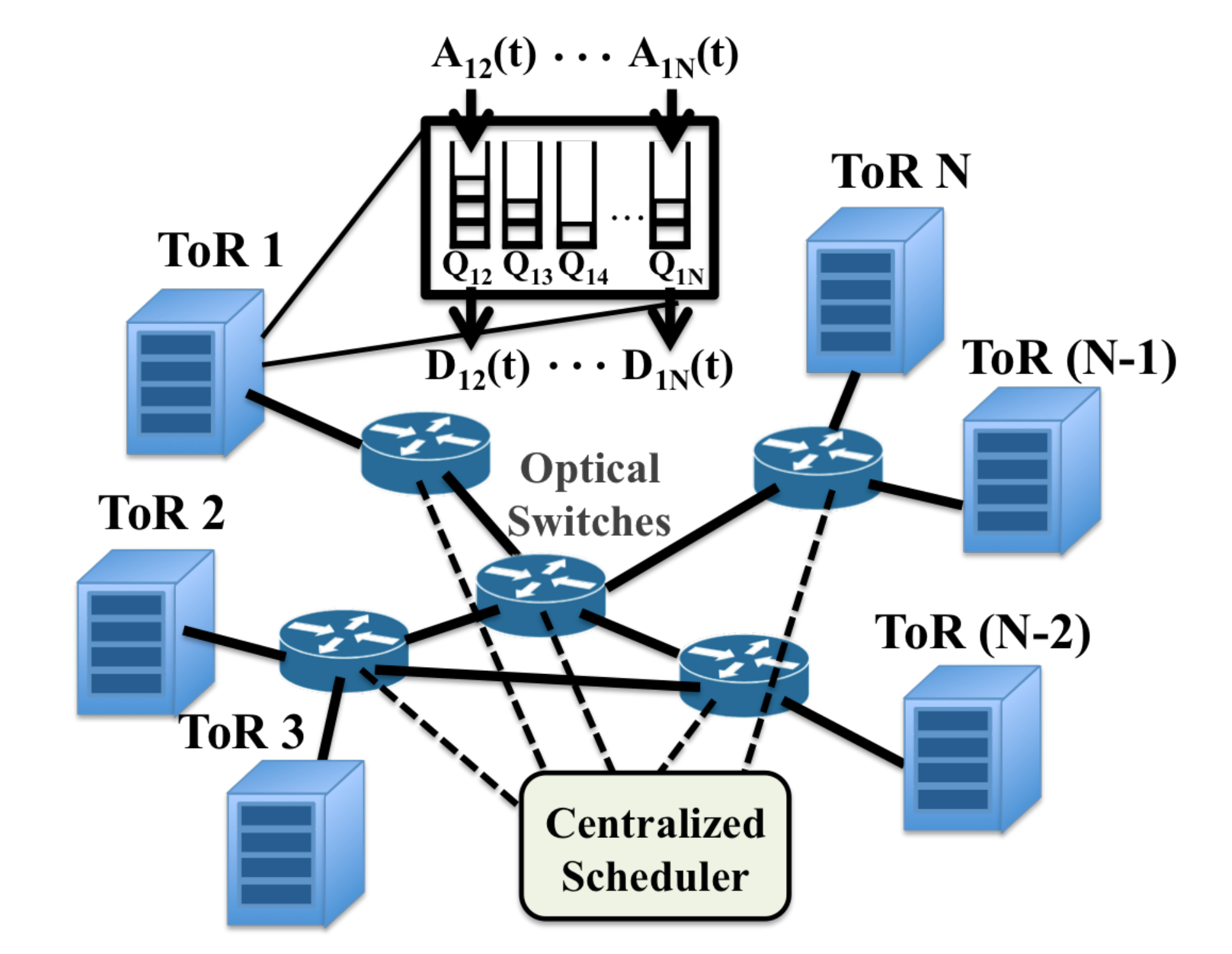}
\caption{An example of the system model}
\label{fig_model}
\end{figure}

Upon reconfiguring a schedule, the network incurs a reconfiguration delay, during which no packet could be transmitted. In this paper we focus on the effect of the reconfiguration delay on the performance of scheduling policies. We make this notion formal through the following two definitions:

\begin{defn}
Let $\{ t_k^S \}_{k=1}^{\infty}$ denote the time instances when the schedule is reconfigured. The schedule between two schedule reconfiguration time instances remains the same, \ie \[
\mathbf{S}(\tau) = \mathbf{S}(t_k^S), \ \ \forall \tau \in [t_k^S, t_{k+1}^S - 1]
\]
\end{defn}

\begin{defn}
Let $\Delta_r$ be the reconfiguration delay associated with reconfiguring the schedule of the network. During the period of schedule reconfiguration, no packet transmission could occur in the network. This means that $\forall i,j \in \{1,2, \dots N\}$, $\forall k \in \bbbn_+$, and $0 \leq \tau \leq \Delta_r$, we have $D_{ij}(t_k^S + \tau) = 0$. Note that for all other time, $D_{ij}(t) = S_{ij}(t)$ if $L_{ij}(t) > 0$.
\end{defn}

The schedule of the network is determined by a \textsl{scheduling policy}, which is defined as below.

\begin{defn}
A \textbf{scheduling policy} determines the schedules $\{ \mathbf{S}(t) \}_{t=0}^{\infty}$ where $\mathbf{S}(t) \in \mathcal{S}$ is measurable with respect to $\sigma \left\{ \{\mathbf{A}(\tau)\}_{\tau=0}^{t}, \{\mathbf{S}(\tau)\}_{\tau=0}^{t-1} \right\}$, which is the history up to time $t$.
\end{defn}

\begin{defn}
A \textbf{Markov scheduling policy} is a scheduling policy under which the schedule $\mathbf{S}(t)$ depends on the history solely through the current state $X_t = \Big( \mathbf{S}(t-1), \mathbf{L}(t) \Big)$. Restricting attention to this class of policies ensures the process $\{ X_t \}_{t=0}^{\infty}$ to be a Markov process. In this case we denote the Markov scheduling policy as $\pi$ and use the notation $\mathbf{\Pi}(t)$ to denote the choice of schedule generated by $\pi$ given the state $X_t = \Big( \mathbf{S}(t-1), \mathbf{L}(t) \Big)$.
\end{defn}

Notice that we intentionally separate the notation to let $\mathbf{S}(t)$ denote the schedules of the network and $\mathbf{\Pi}(t)$ denote the schedule generated by the Markov policy $\pi$. 

A specific scheduling policy of interest to our work is the MaxWeight policy defined below.

\begin{defn}
\label{def_weight}
Given a schedule $\mathbf{S} \in \mathcal{S}$, we define the \textbf{weight} of schedule $\mathbf{S}$ at time $t$ as 
\begin{align}
W_{\mathbf{S}}(t)
= \Big\langle \mathbf{L}(t), \mathbf{S} \Big\rangle 
= \sum_{i=1}^N\sum_{j=1}^N L_{ij}(t)S_{ij}(t).
\label{schedule_weight}
\end{align}
\end{defn}
\begin{defn}
The \textbf{MaxWeight} policy determines the schedule $\mathbf{S}^*(t)$ to be the schedule that has the maximum weight among the feasible schedules at time $t$, \ie
\[
\mathbf{\Pi}^*(t) = \arg\max_{\mathbf{S} \in \mathcal{S}} W_{\mathbf{S}}(t).
\]
We also denote the weight of the MaxWeight schedule at time~$t$ as $W^*(t) = \max\limits_{\mathbf{S} \in \mathcal{S}} \sum_{i=1}^N\sum_{j=1}^N L_{ij}(t)S_{ij}(t)$ and call it the \textbf{maximum weight} at time $t$. 
\end{defn}

\subsection{Network Stability}

\begin{defn}
The network is \textbf{strongly stable} under policy $\pi$ or a policy $\pi$ is said to strongly stabilizes the network if its queue lengths $\mathbf{L}(t)$ satisfies:
\[
\limsup\limits_{t \rightarrow \infty}  \frac{1}{t} \sum_{\tau = 0}^{t-1} \bbbe\{ ||\mathbf{L}(\tau)|| \} < \infty
\]
where $||\mathbf{L}(t)|| = \sum_{i,j=1}^N L_{ij}(t)$ is the total queue length of the system and the expectation is taken with respect to the statistics induced by a random packet arrivals and policy $\pi$. This means that a scheduling policy directly controls the statistics of queue occupancies and hence the per packet delay.
\end{defn}

With the notion of the strong stability, we may then define the admissible arrival traffic and the throughput optimality of a scheduling policy as follows:

\begin{defn}
The arrival traffic $\mathbf{A}(t)$ is \textbf{admissible} if there exists a scheduling policy which strongly stabilizes the system. The traffic rate matrix $\boldsymbol{\lambda}$ is admissible if $\mathbf{A}(t)$ is admissible.
\end{defn}

\begin{defn}
The \textbf{capacity region} is the set of all admissible traffic rate matrices and is denoted as $\boldsymbol{\Lambda}$.
\end{defn}

We may then define the traffic load as follows:
\begin{defn}
The \textbf{load} of the traffic rate matrix $\boldsymbol{\lambda}$ is defined as
\[
\rho(\boldsymbol{\lambda}) = \inf\{r: \boldsymbol{\lambda} \in r\bar{\boldsymbol{\Lambda}}, \ 0 < r < 1 \}
\]
where $\bar{\boldsymbol{\Lambda}}$ is the closure of $\boldsymbol{\Lambda}$. We shall use $\rho$ instead of $\rho(\boldsymbol{\lambda})$ whenever there is no confusion.
\end{defn}

\begin{defn}
A scheduling policy achieves \textbf{throughput optimality} if it strongly stabilizes the network under any admissible arrival traffic.
\end{defn}

When $\Delta_r = 0$, several scheduling policies (e.g. \cite{Tassiulas}, \cite{Tassiulas_random}, \cite{randomized}) has been shown to achieve throughput optimality in the literature. However, in the regime of $\Delta_r > 0$, these scheduling policies typically loses the throughput optimality guarantee since they do not address the fact that each schedule reconfiguration would result in a significant reduction in the duty cycle and hence decrease the utility of the network. The challenge for scheduling in the $\Delta_r > 0$ regime is that both the quality of the schedules and the rate of schedule reconfiguration affects the performance of a scheduling policy. It is not hard to see that there is a tradeoff between these two factors: Reducing the rate of reconfiguration means that the network is forced to stick with a schedule longer and loses the chance to use a better schedule; on the other hand, pursuing a better schedule most of the time inevitably increases the rate of reconfiguration and thus the incurred overhead. Therefore, a good scheduling policy in the $\Delta_r > 0$ regime must strive to achive the balance between these two factors.

\section{Main Results}\label{main}

In this section, we introduce a broad class of adaptive scheduling policies that balance the rate of schedule reconfiguration and the quality of the schedules without prior knowledge of the arrival traffic. We first introduce a general approach that could transform any scheduling policy to an adaptive variant of the original policy and give some example policies generated by this approach. We then introduce several adaptive policies that could achieve throughput optimality in the $\Delta_r > 0$ regime. In particular, we primarily focus on scheduling policies that have the schedule weight close to the MaxWeight policy (either in deterministic sense or in expectation). These scheduling policies have been shown in the literature (e.g. \cite{MaxWeight-f}) to achieve throughput optimality when $\Delta_r = 0$. In this paper, we show that under mild conditions, the adaptive variants of these policies achieve throughput optimality under any fixed reconfiguration delay $\Delta_r > 0$.

\subsection{Adaptive Policies}
\label{adaptive_policies}

Given a Markov scheduling policy $\pi$ and current state $X_t$, let $\mathbf{\Pi}(t) = \pi(X_t)$ denote the schedule generated by $\pi$ at time $t$. We then define the following:
\begin{itemize}
\item $W^{\pi}(t) = W_{\mathbf{\Pi}(t)}(t)$ is the weight of the schedule $\mathbf{\Pi}(t)$ at time $t$
\item $\tilde{W}(t) = W_{\mathbf{S}(t-1)}(t)$ is the weight of the previous schedule $\mathbf{S}(t-1)$ at time $t$
\end{itemize}

Now note that at any given time $t$, $\Delta W(t) =  W^{\pi}(t) - \tilde{W}(t)$ measures the potential improvement (in terms of schedule weight (\ref{schedule_weight})) associated with following policy $\pi$ instead of sticking with the previously used schedule. Since each schedule change results in a loss in duty cycle, our proposed class of adaptive policies show some inertia against frequent schedule reconfiguration. More precisely, let us define a \textbf{hysteresis function} $g$ where $g: \bbbr \rightarrow \bbbr$ is a nonnegative, continuous, strictly increasing, and sublinear (\ie $\lim\limits_{x \rightarrow \infty} \frac{g(x)}{x} = 0$) function.

Our proposed $g$-adaptive Markov policy $\pi^g$ uses the new schedule $\mathbf{\Pi}(t)$ only if $\Delta W(t) > g(W^{\pi}(t))$. In other words,
\begin{align*}
\mathbf{\Pi}^{g}(t) 
&= \pi^g \Big( \mathbf{S}(t-1), \mathbf{L}(t) \Big)  \\
&= \left\{
\begin{array}{ll}
\mathbf{S}(t-1) & \mbox{if } \Delta W(t) \leq g(W^{\pi}(t)) \\
\mathbf{\Pi}(t) & \mbox{if } \Delta W(t) > g(W^{\pi}(t))
\end{array}
\right. 
\end{align*}
Note that the proposed $g$-adaptive Markov policy could be constructed from any Markov scheduling policy. Given the Markov policy $\pi$, we call $\pi^g$ the $g$-adaptive variant of $\pi$.

The intuition behind this construction is that the $g$-adaptive policy holds on to the last schedule as long as it is ``good enough'' relative to the current schedule generated by $\pi$, $\Pi(t)$. The sublinearity of the function $g(\cdot)$ is a technical assumption used to achieve the throughput optimality, which would become clear in the analysis given in the next subsection.

We now give some examples for possible combinations of the function $g(\cdot)$ and the scheduling policy $\pi$.

\begin{example}{(the Adaptive MaxWeight policy \cite{Infocom15})}\\
\indent Let the scheduling policy $\pi$ be the MaxWeight policy, and the function $g(\cdot)$ takes the form $g(x) = (1-\gamma)x^{1-\delta}$, where $\gamma \in (0, 1), \delta \in (0,1)$. Then the $g$-adaptive variant of $\pi$ is called the Adaptive MaxWeight, as introduced in \cite{Infocom15}. \\
\indent The Adaptive MaxWeight policy computes the weight difference between the MaxWeight schedule and its last schedule $\Delta W = W^*(t) - \tilde{W}^*(t)$ to the threshold $g(W^*(t)) = (1-\gamma) (W^*(t))^{1-\delta}$ at each time slot. It reconfigures to the MaxWeight schedule if the difference is above the threshold, and keeps the current schedule if otherwise.
\label{adaptive_maxweight}
\end{example}

\begin{example}{(the $g$-adaptive variant of the Pipelined MaxWeight policy)}\\
\indent Under the pipelined MaxWeight policy \cite{1-APRX}, the scheduler determines the schedule at time $t$ to be the schedule maximizing the weight at time $t-K$, for some fixed scalar $K < \infty$. Intuitively, one can think of the pipelined MaxWeight as a scheduler that initiates MaxWeight computation at each time slot but obtains and enforces it only $K$ time slots later. Therefore, the schedule at time $t$ is the MaxWeight schedule based on the queue length information at time $t-K$, \ie $\mathbf{L}(t-K)$. 
The selection of the function $g(\cdot)$ could be any continuous, strictly increasing, and sublinear function, e.g. $g(x) = (1-\gamma)x^{1-\delta}$ as in Example~\ref{adaptive_maxweight}.
\label{adaptive_pipeline_maxweight}
\end{example}

Since the MaxWeight scheduling policy is well known for its high computation complexity, in the literature several lower complexity policies have also been proposed with good stability conditions when $\Delta_r = 0$. We may consider the adaptive variant of these policies as well.

\begin{example}{(the $g$-adaptive variant of the Tassiulas random policy)} \\
\indent The Tassiulas Random policy \cite{Tassiulas_random} utilizes random schedule selection and memory to determine the schedule. It compares the weight between the last schedule and a randomly selected schedule (according to an arbitrary distribution on the feasible schedule $\mathcal{S}$, say uniformly random). Let $\mathbf{Z}(t)$ be the randomly selected schedule at time $t$, then the schedule determined by the Tassiulas random policy is given by
\begin{align*}
\mathbf{\Pi}_T(t)= \left\{
\begin{array}{ll}
\mathbf{\Pi}_T(t-1) & \mbox{if }  W_{\mathbf{\Pi}_T(t-1)}(t) \geq W_{\mathbf{Z}(t)}(t) \\
\mathbf{Z}(t) & \mbox{otherwise } 
\end{array}
\right.
\end{align*}
\label{adaptive_tassiulas}
\end{example}

\begin{example}{(the $g$-adaptive variant of the Hamiltonian policy)} \\
\indent The Hamiltonian policy \cite{randomized} utilizes the Hamiltonian walk on the set of permutation matrices and memory to determine the schedule. It compares the schedule weight between the last schedule and the  schedule on the Hamiltonian path and select the schedule with higher weight. Specifically, let $\mathbf{H}(t)$ be the schedule on the Hamiltonian path at time $t$, then the schedule determined by the Hamiltonian policy is given by
\begin{align*}
\mathbf{\Pi}_H(t)= \left\{
\begin{array}{ll}
\mathbf{\Pi}_H(t-1) & \mbox{if }  W_{\mathbf{\Pi}_H(t-1)}(t) \geq W_{\mathbf{Z}(t)}(t) \\
\mathbf{H}(t) & \mbox{otherwise } 
\end{array}
\right.
\end{align*}
\label{adaptive_hamilton}
\end{example}

\begin{example}{(the $g$-adaptive variant of the Maximum Size scheduling policy)} \\
\indent A maximum size scheduling policy \cite{Maxsize} selects a schedule that has the maximum number of nonempty queues. It can be view as a variation of the MaxWeight policy except that each nonempty queue has weight one, and each empty queue has weight zero. In particular, the schedule is selected as
\[
\mathbf{\Pi}_{MS}(t) = \arg\max_{\mathbf{S} \in \mathcal{S}} \sum_{i=1}^N\sum_{j=1}^N S_{ij}(t) \mathds{1}_{\{L_{ij}(t) > 0\}}
\]
where $\mathds{1}_{\{L_{ij}(t) > 0\}}$ is the indicator function for the event $\{L_{ij}(t) > 0\}$. 
\label{adaptive_maxsize}
\end{example}

In the next subsection we discuss conditions required for an adaptive variant to achieve throughput optimality under $\Delta_r > 0$. Notice that the maximum size policy does not achieve throughput optimality even in the $\Delta_r = 0$ regime \cite{Maxsize}, therefore we do not consider the throughput analysis of example~\ref{adaptive_maxsize}.

\subsection{Drift Analysis}

Throughout this subsection, we impose the following assumption on the arrival traffic:
\begin{assumption}
Assume the arrival at each queue is bounded from above, \ie $\mathbf{A}_{ij}(t) \leq A_{max} < \infty, \forall i,j \in \{1,2, \dots, N\}, \forall t$.
\label{assumption_traffic}
\end{assumption}

We start with a class of scheduling policies that guarantee bounded schedule weight difference to the MaxWeight schedule at all times. 

\begin{condition}
Suppose there exists a constant $G < \infty$ such that the schedule weight of the scheduling policy $\pi$, $W^{\pi}(t) = W_{\mathbf{\Pi}(t)}(t)$, satisfies
\begin{align*}
W^{\pi}(t) \geq W^*(t) - G, \ \ \forall t
\end{align*}
\label{strict_bound}
where $W^*(t)$ is the maximum weight, the schedule weight of the MaxWeight schedule at time $t$ as introduced in Section \ref{model}.
\end{condition}
Given a scheduling policy $\pi$ that satisfies Condition~\ref{strict_bound},the following Theorem establishes that the $g$-adaptive variant of $\pi$ achieves the throughput optimality.

\begin{theorem}
\label{Adaptive_Approximation}
Given any reconfiguration delay $\Delta_r > 0$. Assume the traffic is admissible and satisfies assumption~\ref{assumption_traffic}. Suppose a scheduling policy $\pi$ satisfies Condition~\ref{strict_bound}, then the $g$-adaptive variant of $\pi$ is throughput optimal.

Moreover, assume further that $g(\cdot)$ is concave. Let $G < \infty$ be the constant in condition \ref{strict_bound}. Let $M = g^{-1}(G + N(A_{\max}+1)T) + NT$, then the $g$-adaptive variant of $\pi$ guarantees the mean queue length to satisfy
\begin{align}
\lim_{t \rightarrow \infty} \bbbe  \left[ ||\mathbf{L}(t)|| \right] \leq \inf_{T > \frac{\Delta_r}{1-\rho}} \tilde{L}_T
\end{align}
where $\tilde{L}_T$ is the fixed point solution to the equation: 
\begin{align*}
\tilde{L}_T =& \frac{N}{1-\frac{\Delta_r}{T}-\rho} 
\Big\{  g(\tilde{L}_T + NA_{\max}T) + G \\
&+ N (T + A_{\max} \Delta_r) 
 + M(1 + NA_{\max}) + \frac{N^2 A_{\max}^2}{2} \Big\}
\end{align*}
\end{theorem}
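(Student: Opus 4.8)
The plan is to prove throughput optimality via a Lyapunov drift argument over a variable-length frame, and then extract the explicit mean-queue-length bound by carefully bounding each term in the drift inequality. I would use the quadratic Lyapunov function $V(\mathbf{L}) = \frac{1}{2}||\mathbf{L}||_2^2 = \frac{1}{2}\sum_{ij} L_{ij}^2$. The central difficulty, and the reason the proof is not a routine copy of the standard MaxWeight argument, is that the $g$-adaptive policy does not serve the MaxWeight schedule at every slot: between reconfiguration epochs it sticks with a stale schedule $\mathbf{S}(t-1)$, and moreover it pays a reconfiguration penalty of $\Delta_r$ idle slots each time it does switch. So the drift must be analyzed over an interval $[t, t+T)$ of some chosen length $T$, and I must simultaneously control (i) how far the served schedule's weight can have drifted below the current MaxWeight during the interval, and (ii) how many reconfigurations (hence idle slots) can occur within the interval.

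The key steps, in order, would be: First, establish a \emph{hysteresis bound}: at any slot $t$, because the policy held its schedule, the weight gap satisfies $\tilde{W}(t) \ge W^\pi(t) - g(W^\pi(t)) \ge W^*(t) - G - g(W^*(t))$ using Condition~\ref{strict_bound} and monotonicity of $g$; and between two consecutive reconfiguration epochs $[t_k^S, t_{k+1}^S)$ the served weight can erode by at most $O(NA_{\max}(t_{k+1}^S - t_k^S))$ per the bounded-arrival assumption plus the fact that each served queue loses at most one packet per active slot. Second, bound the reconfiguration count: I claim that once $||\mathbf{L}(t)||$ exceeds a threshold roughly $M = g^{-1}(G + N(A_{\max}+1)T) + NT$, the schedule cannot be reconfigured more than once in any window of $T$ slots — the intuition being that immediately after a reconfiguration the served weight equals $W^\pi$, which is within $G$ of $W^*$, and for $\Delta W$ to exceed the threshold $g(W^\pi)$ again the queue lengths would have to have grown or shifted by more than $g$ evaluated at a large value, which the sublinearity of $g$ makes impossible over only $T$ slots when queues are large. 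This is the step where sublinearity (and for the explicit bound, concavity) of $g$ is essential. Third, combine: over the frame $[t, t+T)$ the total service is at least $(T - \Delta_r)$ slots at weight at least $W^*(t) - G - g(\cdot) - O(NA_{\max}T)$, and since $W^*(t) \ge \frac{\rho}{N}$-type lower bounds relate the MaxWeight to $||\mathbf{L}(t)||$ via the capacity-region geometry (there is a schedule combination carrying rate $\boldsymbol\lambda/\rho$, so $W^*(t) \ge \frac{1}{N}||\mathbf{L}(t)||$ after averaging against that convex combination), while total arrivals over the frame contribute at most $\rho T \cdot W^*(t)$-worth of work plus lower-order terms, one obtains a negative drift $\bbbe[V(\mathbf{L}(t+T)) - V(\mathbf{L}(t)) \mid \mathbf{L}(t)] \le -\epsilon ||\mathbf{L}(t)|| + C$ whenever $||\mathbf{L}(t)||$ is large enough, with $\epsilon$ proportional to $(1 - \frac{\Delta_r}{T} - \rho)$ — which is positive precisely when $T > \frac{\Delta_r}{1-\rho}$. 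Fourth, invoke the Foster–Lyapunov / Neely-style frame-based drift theorem to conclude strong stability, and then solve the resulting linear recursion on $\bbbe[||\mathbf{L}(t)||]$ explicitly to get the fixed-point equation for $\tilde L_T$, optimizing over admissible $T$.

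The main obstacle I anticipate is step two: bookkeeping the worst-case number of reconfigurations within a frame and showing the served-weight-deficit stays $O(g(||\mathbf{L}||) + NA_{\max}T)$ rather than growing linearly in $||\mathbf{L}||$. This requires tracking the weight of the \emph{previous} schedule as queues evolve, using that between switches the selected-and-served queues drain while unselected queues can only grow by $A_{\max}$ per slot, and then feeding this into the threshold $g(W^\pi(t))$ — the interplay of "how much can $\tilde W$ fall" versus "how much must $\Delta W$ rise to trigger a switch" is where the constant $M$ and the concavity assumption enter, since concavity lets me pass $g$ through the averaging/Jensen steps in the drift bound cleanly. Once that quantitative hysteresis lemma is in place, the remaining drift algebra is standard if tedious, and the fixed-point form of $\tilde L_T$ falls out by collecting the additive constants $g(\tilde L_T + NA_{\max}T) + G + N(T + A_{\max}\Delta_r) + M(1+NA_{\max}) + \frac{N^2 A_{\max}^2}{2}$ and dividing by the drift rate $\frac{1-\frac{\Delta_r}{T}-\rho}{N}$.
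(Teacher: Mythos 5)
Your plan follows the paper's proof essentially step for step: the same quadratic Lyapunov function, the same $T$-slot frame with $T > \frac{\Delta_r}{1-\rho}$, the same per-slot weight-gap bound $G + g(W^*(t))$ coming from the hysteresis rule plus Condition~\ref{strict_bound}, the same key lemma that at most one reconfiguration (hence at most $\Delta_r$ idle slots) occurs per frame once $W^*$ exceeds $M = g^{-1}(G+N(A_{\max}+1)T)+NT$, and the same Foster--Lyapunov conclusion with Jensen's inequality (via concavity of $g$) yielding the fixed-point bound $\tilde{L}_T$. The only quibble is one of attribution: the one-reconfiguration-per-frame lemma needs only that $g$ is strictly increasing and unbounded (so that $M$ is finite), whereas sublinearity of $g$ is what makes the drift coefficient negative in the combining step rather than what prevents a second switch.
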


The proof of Theorem \ref{Adaptive_Approximation} is given in Appendix. The proof utilizes the Foster-Lyapunov Theorem and consists of two main components. The first one shows that the schedule of the adaptive policy $\pi^g$ has weight difference to the MaxWeight schedule bounded by a sublinear function of the maximum weight. This potentially gives the negative Lyapunov drift, as similarly argued in \cite{1-APRX}. The second part is that the rate of schedule reconfiguration becomes smaller as the queue lengths become larger. With this property, we show that the overhead incurred by the schedule reconfiguration delay becomes arbitrarily small when the total queue lengths $||\mathbf{L}(t)||$ increases. This suffices to give the guarantee of negative expected drift when $||\mathbf{L}(t)||$ is large and thus guarantee the stability.

With Theorem \ref{Adaptive_Approximation}, we are now ready to show the throughput optimality of some example adaptive policies given in the previous subsection.

\begin{corollary}
\label{coro1}
The adaptive policies in examples~\ref{adaptive_maxweight} and \ref{adaptive_pipeline_maxweight} achieve throughput optimality.
\end{corollary}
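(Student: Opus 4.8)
\emph{Proof plan for Corollary~\ref{coro1}.}
The plan is to obtain both statements as direct instances of Theorem~\ref{Adaptive_Approximation}. For each of the two policies it suffices to check two things: (i) the chosen $g$ is a legitimate hysteresis function, i.e.\ nonnegative, continuous, strictly increasing and sublinear (and, for the delay bound, concave); and (ii) the base policy $\pi$ satisfies Condition~\ref{strict_bound} with some finite constant $G$. Both examples use $g(x) = (1-\gamma)x^{1-\delta}$ with $\gamma,\delta \in (0,1)$, so step (i) is common: on $[0,\infty)$ (which is where the weights $W^{\pi}(t)\ge 0$ live) this $g$ is nonnegative, continuous, strictly increasing because $1-\delta>0$, concave because $0<1-\delta<1$, and sublinear because $g(x)/x = (1-\gamma)x^{-\delta}\to 0$ as $x\to\infty$. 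Hence everything reduces to verifying Condition~\ref{strict_bound}.

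For Example~\ref{adaptive_maxweight} this is immediate: there $\pi$ is the MaxWeight policy itself, so $W^{\pi}(t) = W^{*}(t)$ for all $t$, and Condition~\ref{strict_bound} holds with $G=0$. Theorem~\ref{Adaptive_Approximation} then yields throughput optimality of the Adaptive MaxWeight.

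For Example~\ref{adaptive_pipeline_maxweight} the base policy $\pi$ is the pipelined MaxWeight, which at time $t$ plays the schedule $\mathbf{\Pi}(t)$ maximizing the weight against the stale queue vector $\mathbf{L}(t-K)$, i.e.\ $\langle \mathbf{L}(t-K),\mathbf{\Pi}(t)\rangle \ge \langle \mathbf{L}(t-K),\mathbf{S}\rangle$ for all $\mathbf{S}\in\mathcal{S}$. The key step is to bound how far the queue vector can drift over $K$ slots: from $L_{ij}(t) - L_{ij}(t-K) = \sum_{\tau=t-K}^{t-1}\big(A_{ij}(\tau)-D_{ij}(\tau)\big)$, together with $0\le A_{ij}(\tau)\le A_{\max}$ (Assumption~\ref{assumption_traffic}) and $0\le D_{ij}(\tau)\le 1$, one gets $|L_{ij}(t)-L_{ij}(t-K)| \le K\max(A_{\max},1)$ entrywise. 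Since every feasible schedule has at most $N$ nonzero entries ($\mathcal{S}\subseteq\mathcal{P}$), this gives $|\langle\mathbf{L}(t),\mathbf{S}\rangle - \langle\mathbf{L}(t-K),\mathbf{S}\rangle| \le NK\max(A_{\max},1)$ for every $\mathbf{S}\in\mathcal{S}$. Applying this to $\mathbf{\Pi}(t)$ and to the current MaxWeight schedule $\mathbf{S}^{*}(t)$, and chaining with the optimality of $\mathbf{\Pi}(t)$ against $\mathbf{L}(t-K)$, one obtains $W^{\pi}(t) = \langle\mathbf{L}(t),\mathbf{\Pi}(t)\rangle \ge \langle\mathbf{L}(t-K),\mathbf{\Pi}(t)\rangle - NK\max(A_{\max},1) \ge \langle\mathbf{L}(t-K),\mathbf{S}^{*}(t)\rangle - NK\max(A_{\max},1) \ge W^{*}(t) - 2NK\max(A_{\max},1)$, so Condition~\ref{strict_bound} holds with $G = 2NK\max(A_{\max},1) < \infty$. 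The finitely many slots $t<K$ are handled by running the same argument with $K$ replaced by $t$ (or by an arbitrary initialization of $\mathbf{\Pi}(t)$ there), which does not affect stability. Theorem~\ref{Adaptive_Approximation} then gives throughput optimality of the $g$-adaptive pipelined MaxWeight.

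The proof is essentially routine; the only points needing a little care are (a) stating the over-$K$-slots displacement bound with the right constant, remembering that a queue loses at most one packet per slot but can gain up to $A_{\max}$, and (b) a bookkeeping matter: the pipelined MaxWeight is not Markov with respect to $X_t=(\mathbf{S}(t-1),\mathbf{L}(t))$ alone, so to fit it into the setting of Theorem~\ref{Adaptive_Approximation} one augments the state with the window $(\mathbf{L}(t-K),\dots,\mathbf{L}(t-1))$; this enlargement leaves the Foster--Lyapunov/drift argument intact, since only Condition~\ref{strict_bound} is used there. Neither point constitutes a real obstacle.
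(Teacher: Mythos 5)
Your proposal is correct and follows essentially the same route as the paper: reduce both examples to verifying Condition~\ref{strict_bound} and invoke Theorem~\ref{Adaptive_Approximation}, with $G=0$ for MaxWeight and, for the pipelined case, a bound on how much any schedule's weight can change over $K$ slots (departures at most one per queue per slot, arrivals at most $A_{\max}$) chained with the optimality of $\mathbf{\Pi}(t)$ against $\mathbf{L}(t-K)$. The only differences are cosmetic: the paper chains through $\mathbf{\Pi}^*(t-K)$ and obtains the slightly tighter constant $G=N(A_{\max}+1)K$, whereas you chain through $\mathbf{S}^*(t)$ and get $G=2NK\max(A_{\max},1)$; your remark about augmenting the state to make the pipelined policy Markov is a reasonable bookkeeping point the paper leaves implicit.
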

\begin{proof}
For example~\ref{adaptive_maxweight}, since $\pi$ is the MaxWeight policy, Condition \ref{strict_bound} is satisfied by definition, with $G = 0$. As for example~\ref{adaptive_pipeline_maxweight}, Condition \ref{strict_bound} is satisfied with $G = N(A_{\max}+1)K$. To see this, note that at most one packet could depart from each queue at each time slot, and hence 
\begin{align}
\Big\langle \mathbf{L}(t), \mathbf{\Pi}(t) \Big\rangle &\geq \Big\langle \mathbf{L}(t-K), \mathbf{\Pi}(t) \Big\rangle - NK \nonumber \\
& = \Big\langle \mathbf{L}(t-K), \mathbf{\Pi}^*(t-K) \Big\rangle - NK.
\label{coro1_1}
\end{align}
Since the arrivals are bounded by $A_{\max}$, we also have
\begin{align}
\Big\langle \mathbf{L}(t), \mathbf{\Pi}^*(t) \Big\rangle \leq \Big\langle \mathbf{L}(t-K), \mathbf{\Pi}^*(t-K) \Big\rangle + NA_{\max}K.
\label{coro1_2}
\end{align}

From (\ref{coro1_1}) and (\ref{coro1_2}), we have $\Big\langle \mathbf{L}(t), \mathbf{\Pi}(t) \Big\rangle \geq \Big\langle \mathbf{L}(t), \mathbf{\Pi}^*(t) \Big\rangle - N(A_{\max}+1)K$ and thus Condition 1 is satisfied.
\end{proof}

\begin{corollary}
\label{coro2}
The $g$-adaptive variant of the Hamiltonian policy given in example~\ref{adaptive_hamilton} achieves throughput optimality.
\end{corollary}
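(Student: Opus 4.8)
\emph{Proof plan.} The strategy is the same as in the proof of Corollary~\ref{coro1}: I will exhibit a finite constant $G$ for which the Hamiltonian policy of Example~\ref{adaptive_hamilton} satisfies Condition~\ref{strict_bound}, and then invoke Theorem~\ref{Adaptive_Approximation}. The feature of the Hamiltonian policy to exploit is that the auxiliary walk $\{\mathbf{H}(t)\}$ is deterministic and cycles through all feasible schedules: one full sweep takes a fixed number of slots $P \leq |\mathcal{S}| \leq N!$, and within any window of $P$ consecutive slots the walk visits every schedule in $\mathcal{S}$ at least once. In particular, for every $t$ there is a slot $s \in \{t-P+1,\dots,t\}$ with $\mathbf{H}(s) = \mathbf{\Pi}^*(t)$, the MaxWeight schedule with respect to the current backlog $\mathbf{L}(t)$.

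Given that fact, I would bound $W^{\pi}(t) = W_{\mathbf{\Pi}_H(t)}(t)$ in three steps. First, since the policy keeps whichever of $\mathbf{\Pi}_H(\tau-1)$ and $\mathbf{H}(\tau)$ is heavier at time $\tau$, and since at most one packet departs each queue per slot so that $\langle \mathbf{D}(\tau),\mathbf{S}\rangle \leq N$ for any schedule $\mathbf{S}$, we get $W_{\mathbf{\Pi}_H(\tau)}(\tau) \geq W_{\mathbf{\Pi}_H(\tau-1)}(\tau) \geq W_{\mathbf{\Pi}_H(\tau-1)}(\tau-1) - N$; iterating from $s$ to $t$ gives $W_{\mathbf{\Pi}_H(t)}(t) \geq W_{\mathbf{\Pi}_H(s)}(s) - PN$. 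Second, the max rule at slot $s$ together with $\mathbf{H}(s) = \mathbf{\Pi}^*(t)$ gives $W_{\mathbf{\Pi}_H(s)}(s) \geq W_{\mathbf{H}(s)}(s) = W_{\mathbf{\Pi}^*(t)}(s)$. Third, since arrivals are bounded by $A_{\max}$, the weight of the fixed schedule $\mathbf{\Pi}^*(t)$ grows by at most $NA_{\max}$ per slot, so $W_{\mathbf{\Pi}^*(t)}(s) \geq W_{\mathbf{\Pi}^*(t)}(t) - PNA_{\max} = W^*(t) - PNA_{\max}$. Chaining the three bounds yields $W^{\pi}(t) \geq W^*(t) - PN(A_{\max}+1)$, so Condition~\ref{strict_bound} holds with $G = PN(A_{\max}+1) \leq N!\,N(A_{\max}+1) < \infty$, and Theorem~\ref{Adaptive_Approximation} delivers throughput optimality of the $g$-adaptive variant (and, since the example hysteresis function $g(x)=(1-\gamma)x^{1-\delta}$ is concave, the accompanying queue-length bound as well).

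I expect the crux to be the three steps taken together: quantifying how far the Hamiltonian policy's retained weight can lag the true maximum weight. Unlike in Corollary~\ref{coro1}, the policy only ``tests'' the current MaxWeight schedule once per sweep, so one must simultaneously control the downward drift of the retained weight (driven by departures, $\leq N$ per slot over a sweep of length $P$) and the upward drift of $W^*$ (driven by arrivals, $\leq NA_{\max}$ per slot). A smaller technical point is that the Hamiltonian policy carries internal memory $\mathbf{\Pi}_H(t-1)$ that is distinct from the deployed schedule $\mathbf{S}(t-1)$; I would treat this memory as an added coordinate of the Markov state, which leaves the drift computation of Theorem~\ref{Adaptive_Approximation} unchanged, and observe that Condition~\ref{strict_bound} then holds for all $t \geq P$ — which is all the Foster--Lyapunov argument needs.
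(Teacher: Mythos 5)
Your argument is correct, but it takes a genuinely different route from the paper: the paper's proof of Corollary~\ref{coro2} is a one-line citation to~\cite{randomized}, where the Hamiltonian policy is stated to satisfy Condition~\ref{strict_bound} with $G = 2N(N!)$, after which Theorem~\ref{Adaptive_Approximation} is invoked exactly as you do. You instead derive the constant from scratch, and your three-step chaining --- retained weight drops by at most $N$ per slot, the max rule catches $\mathbf{H}(s)=\mathbf{\Pi}^*(t)$ at some $s$ in the last sweep, and the weight of the fixed schedule $\mathbf{\Pi}^*(t)$ grows by at most $NA_{\max}$ per slot --- is sound and yields the finite constant $G = PN(A_{\max}+1)$, which is all Theorem~\ref{Adaptive_Approximation} needs. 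What your approach buys is self-containedness and an explicit dependence of $G$ on $A_{\max}$ (note your constant differs from the cited $2N(N!)$, which is arrival-independent; either finite value suffices). Two minor caveats: first, your claim that \emph{every} window of $P$ consecutive slots contains every schedule requires the walk to be a Hamiltonian \emph{cycle} rather than a path that restarts or reverses; if not, replace $P$ by $2P$, which changes nothing. Second, your observation that the Hamiltonian policy's internal memory $\mathbf{\Pi}_H(t-1)$ is distinct from the deployed schedule $\mathbf{S}(t-1)$, and must be carried as an extra Markov-state coordinate, is a genuine subtlety that the paper glosses over (its Example~\ref{adaptive_hamilton} even compares against $W_{\mathbf{Z}(t)}(t)$ where $W_{\mathbf{H}(t)}(t)$ is clearly intended); your treatment of it is the right one and does not disturb the drift analysis.
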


\begin{proof}
In \cite{randomized} the Hamiltonian policy is shown to satisfy Condition \ref{strict_bound} with $G = 2N (N!)$, hence by Theorem \ref{Adaptive_Approximation} it achieves throughput optimality.
\end{proof}

As we saw in example~\ref{adaptive_tassiulas}, some low complexity scheduling policies utilize the random schedule selection and memory to approximate the MaxWeight policy (e.g. \cite{Tassiulas_random}, \cite{randomized}). For a random scheduling policy $\pi$, the schedule $\mathbf{\Pi}(t)$ is a sequence of random schedules from the feasible schedule set $\mathcal{S}$. These policies guarantee bounded weight differences to the MaxWeight schedule in the expected sense: 

\begin{condition}
Consider a Markov scheduling policy $\pi$ with weight $W^{\pi}(t)$ that satisfies the property:
\begin{align}
\bbbe^{\pi} \left[ W^{\pi}(t) | \mathbf{L}(t) \right] \geq W^*(t) - G, \ \ \forall t
\end{align}
where $G < \infty$ is a constant, and the expectation $\bbbe^{\pi}$ is taken with respect to the policy $\pi$.
\label{expected_bound}
\end{condition}

We consider the throughput optimality of the adaptive variants of policies satisfying the above condition as well. 

\begin{theorem}
\label{Adaptive_Expected}
Given any reconfiguration delay $\Delta_r > 0$. Assume the traffic is admissible and satisfies assumption \ref{assumption_traffic}. For any scheduling policy $\pi$ that satisfies Condition \ref{expected_bound}, the $g$-adaptive version of $\pi$ achieves throughput optimality.
\end{theorem}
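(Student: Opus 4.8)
The plan is to prove that the $g$-adaptive variant $\pi^{g}$ is strongly stable under every admissible arrival rate matrix, via a $T$-step Foster--Lyapunov argument with the quadratic Lyapunov function $V(\mathbf{L})=\sum_{i,j}L_{ij}^{2}$, following the same two-part skeleton as the proof of Theorem~\ref{Adaptive_Approximation} (a sublinear bound on the weight gap between the used schedule and the MaxWeight schedule, together with a vanishing reconfiguration rate), but now carrying the weight estimates only in conditional expectation. Throughout, work with the Markov state $X_{t}=(\mathbf{S}(t-1),\mathbf{L}(t))$. Fix an admissible $\boldsymbol{\lambda}$ with $\rho=\rho(\boldsymbol{\lambda})<1$; then $\boldsymbol{\lambda}\in r\bar{\boldsymbol{\Lambda}}$ for some $r\in(\rho,1)$, and since every such $\boldsymbol{\lambda}$ is entrywise dominated by $r$ times a convex combination of feasible schedules, one has $\langle\mathbf{L}(t),\boldsymbol{\lambda}\rangle\le r\,W^{*}(t)$ at every $t$, the inequality that will produce the negative drift.

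First I would record the single consequence of Condition~\ref{expected_bound} that survives the adaptive rule intact. In \emph{both} branches of $\pi^{g}$ the used schedule obeys, pathwise, $W_{\mathbf{S}(t)}(t)\ge W^{\pi}(t)-g\big(W^{*}(t)\big)$: when the schedule is kept, the keep-rule $\Delta W(t)\le g(W^{\pi}(t))$ reads $\tilde W(t)\ge W^{\pi}(t)-g(W^{\pi}(t))\ge W^{\pi}(t)-g(W^{*}(t))$, using $W^{\pi}(t)\le W^{*}(t)$ and the monotonicity of $g$; when it is switched the bound is trivial. Taking $\mathbb{E}^{\pi}[\,\cdot\mid X_{t}]$ and invoking Condition~\ref{expected_bound} gives $\mathbb{E}^{\pi}[\,W_{\mathbf{S}(t)}(t)\mid X_{t}]\ge W^{*}(t)-G-g(W^{*}(t))$, i.e.\ exactly the ``schedule weight within a sublinear gap of the maximum weight'' property used in the proof of Theorem~\ref{Adaptive_Approximation}, now in expectation. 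No concavity of $g$ is needed here because $W^{*}(t)$ is a deterministic function of $X_{t}$.

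The genuinely new work is to bound the expected number $R_{[t,t+T)}$ of reconfigurations in a window of fixed length $T$ and show it stays bounded as $\|\mathbf{L}(t)\|\to\infty$. The deterministic half of the argument from Theorem~\ref{Adaptive_Approximation} still applies: a reconfiguration at a time $\sigma$ forces the weight deficit $W^{*}(\sigma)-\tilde W(\sigma)$ of the currently used schedule to exceed $g(W^{\pi}(\sigma))$, and since $W^{*}$ and the weight of any fixed schedule each change by at most $N(A_{\max}+1)$ per slot, the deficit of the schedule installed by a reconfiguration can re-grow only at that bounded rate; hence if that post-reconfiguration deficit is at most $\tfrac12 g(W^{*}(\sigma))$ (call such a reconfiguration \emph{good}) the next reconfiguration is $\Omega\big(g(W^{*}(\sigma))\big)$ slots away, hence more than $T$ slots away once $\|\mathbf{L}(t)\|$ is large. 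What breaks for a randomized $\pi$ is the automatic conclusion that a freshly chosen schedule is nearly MaxWeight: Condition~\ref{expected_bound} only yields $\mathbb{E}^{\pi}[\,W^{*}(\sigma)-W^{\pi}(\sigma)\mid X_{\sigma}]\le G$, so by Markov's inequality a reconfiguration at $\sigma$ is \emph{bad} (post-deficit $>\tfrac12 g(W^{*}(\sigma))$) with conditional probability at most $2G/g(W^{*}(\sigma))$. Summing this tail bound over the $T$ slots of the window (via the tower property, using $W^{*}(\sigma)\ge W^{*}(t)-NT$ throughout) gives $\mathbb{E}^{\pi}[\,\#\{\text{bad reconfigurations}\}\mid X_{t}]\le 2GT/g\big(W^{*}(t)-NT\big)\to 0$; and since two consecutive good reconfigurations cannot both lie in the window when $\|\mathbf{L}(t)\|$ is large, every good reconfiguration except the first is immediately preceded by a bad one, whence $R_{[t,t+T)}\le 1+2\,\#\{\text{bad}\}$ and $\mathbb{E}^{\pi}[\,R_{[t,t+T)}\mid X_{t}]\to 1$. \textbf{I expect this step to be the main obstacle:} in Theorem~\ref{Adaptive_Approximation} reconfigurations are automatically $\Theta(g(W^{*}))$-separated, whereas here one must isolate the low-probability ``bad landing'' events and thread the conditional-expectation bookkeeping correctly through a decomposition of the window into inter-reconfiguration epochs.

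With these two ingredients the $T$-step drift assembles as in Theorem~\ref{Adaptive_Approximation}. Over $[t,t+T)$ the expected served weight is at least $T\big(W^{*}(t)-G-g(W^{*}(t)+NA_{\max}T)\big)$ up to lower-order terms, minus the reconfiguration idling overhead, which is at most $(\Delta_{r}+1)\,\mathbb{E}^{\pi}[\,R_{[t,t+T)}\mid X_{t}]\,\big(W^{*}(t)+O(T)\big)$, while the arrivals inflate $V$ by at most $2T\langle\mathbf{L}(t),\boldsymbol{\lambda}\rangle+O(T^{2}N^{2}A_{\max}^{2})\le 2rT\,W^{*}(t)+O(T^{2}N^{2}A_{\max}^{2})$. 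Collecting terms, $\mathbb{E}^{\pi}[\,V(\mathbf{L}(t+T))-V(\mathbf{L}(t))\mid X_{t}]\le-2T(1-r)W^{*}(t)+\big(\text{sublinear in }W^{*}(t)\big)+2(\Delta_{r}+1)(1+o(1))\,W^{*}(t)+O(T^{2}N^{2}A_{\max}^{2})$. Choosing $T$ large enough that $(\Delta_{r}+1)<T(1-r)$ (a choice depending only on $r$ and $\Delta_{r}$, not on the state) makes the right-hand side at most $-\varepsilon<0$ for all states outside a finite set, and bounded on that set. The multi-step Foster--Lyapunov theorem then yields $\limsup_{t\to\infty}\tfrac1t\sum_{\tau=0}^{t-1}\mathbb{E}\|\mathbf{L}(\tau)\|<\infty$, i.e.\ strong stability; as $\boldsymbol{\lambda}$ was an arbitrary admissible rate matrix, $\pi^{g}$ is throughput optimal.
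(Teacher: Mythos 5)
Your proposal is correct and follows essentially the same route as the paper's proof: a $T$-step quadratic-Lyapunov drift, the in-expectation weight-gap bound $\bbbe^{\pi}\left[W_{\mathbf{S}(t)}(t)\mid X_t\right]\ge W^*(t)-G-g(W^*(t))$ obtained from Condition~\ref{expected_bound}, and a Markov-inequality control of the post-reconfiguration weight deficit showing that the expected reconfiguration overhead per window tends to a single $\Delta_r$ as $||\mathbf{L}||\to\infty$ (this is exactly the content of the paper's Lemma~\ref{lemma_frequency_random}). The only difference is bookkeeping in that lemma: you count ``bad landings'' and bound $\bbbe\left[R_{[t,t+T)}\right]\le 1+2\,\bbbe[\#\{\text{bad}\}]$, whereas the paper bounds $\Pr\{Z_{[t_k,t_{k+1}]}\le 1\}\ge 1-G/h(W^*(t_k))$ via the first reconfiguration instant and uses the crude fallback of overhead at most $T$ on the complementary event; both give the same $\Delta_r+O\big(TG/g(W^*-NT)\big)$ estimate and the same final drift inequality.
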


With Theorem \ref{Adaptive_Expected}, we can extend Corollaries~\ref{coro1} and \ref{coro2} to the $g$-adaptive variant of the Tassiulas random policy in example~\ref{adaptive_tassiulas}:

\begin{corollary}
The $g$-adaptive variant of the Tassiulas random policy given in example~\ref{adaptive_tassiulas} achieves throughput optimality.
\end{corollary}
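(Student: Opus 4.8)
The plan is to derive the corollary from Theorem~\ref{Adaptive_Expected}. The deterministic Condition~\ref{strict_bound} cannot be expected to hold here, since on low-probability stretches the schedule carried by the Tassiulas random policy can fall arbitrarily far below the maximum weight; so one must route through the expected bound. Concretely, since the arrival traffic is admissible and satisfies Assumption~\ref{assumption_traffic}, it suffices to exhibit a finite constant $G$ for which the Tassiulas random policy of Example~\ref{adaptive_tassiulas} obeys Condition~\ref{expected_bound}, that is, $\bbbe^{\pi}[W^{\pi}(t)\mid\mathbf{L}(t)]\geq W^{*}(t)-G$ for all $t$. This is the stochastic counterpart of the bound used for the Hamiltonian policy in Corollary~\ref{coro2}: the Hamiltonian walk traverses the MaxWeight permutation within every window of $N!$ slots, whereas here the randomly drawn schedule $\mathbf{Z}(t)$ coincides with a MaxWeight schedule only after a geometric number of slots. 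The needed estimate is essentially the one in \cite{Tassiulas_random}, which I would restate in the present notation.

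First I would record two one-slot inequalities valid under the model's conventions (arrivals at slot end, $D_{ij}=0$ during every reconfiguration window). (i) For any $\mathbf{S}\in\mathcal{S}$, $\langle\mathbf{L}(t),\mathbf{S}\rangle\leq\langle\mathbf{L}(t-1),\mathbf{S}\rangle+\langle\mathbf{A}(t-1),\mathbf{S}\rangle\leq\langle\mathbf{L}(t-1),\mathbf{S}\rangle+NA_{\max}$, hence $W^{*}(t)\leq W^{*}(t-1)+NA_{\max}$. (ii) Since $\mathbf{D}(t-1)$ has at most one unit entry per queue and arrivals are nonnegative, $\langle\mathbf{L}(t),\mathbf{S}\rangle\geq\langle\mathbf{L}(t-1),\mathbf{S}\rangle-N$, so the weight of a retained schedule slips by at most $N$ per slot. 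Moreover, because $\mathbf{Z}(t)$ is drawn from a fixed distribution with full support on the finite set $\mathcal{S}$, the number $\delta:=\min_{\mathbf{S}\in\mathcal{S}}\Pr(\mathbf{Z}(t)=\mathbf{S})$ is strictly positive; hence, conditionally on the history, $\mathbf{Z}(t)$ equals a MaxWeight schedule $\mathbf{\Pi}^{*}(t)$ with probability at least $\delta$, and on that event $W^{\pi}(t)=W_{\mathbf{\Pi}_T(t)}(t)=W^{*}(t)$.

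Next I would chain these facts. Let $\sigma(t)$ be the most recent slot $\tau\leq t$ at which $\mathbf{Z}(\tau)$ was a MaxWeight schedule (such a slot occurs after a finite time a.s.), so that $W^{\pi}(\sigma(t))=W^{*}(\sigma(t))$. On every later slot the Tassiulas rule either keeps its schedule---losing at most $N$ by (ii)---or moves to a strictly heavier one, so $W^{\pi}(\tau)\geq W^{\pi}(\tau-1)-N$; iterating from $\sigma(t)$ and combining with (i) gives $W^{*}(t)-W^{\pi}(t)\leq N(A_{\max}+1)\,(t-\sigma(t))$. Since, conditionally on the past, each fresh draw is a MaxWeight schedule with probability at least $\delta$, the age $t-\sigma(t)$ is stochastically dominated by a geometric random variable and $\bbbe[\,t-\sigma(t)\,]\leq 1/\delta$ uniformly in $t$. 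Taking expectations then yields Condition~\ref{expected_bound} with $G=N(A_{\max}+1)/\delta$, and Theorem~\ref{Adaptive_Expected} delivers throughput optimality of the $g$-adaptive variant for every $\Delta_r>0$.

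The step I expect to require the most care is not the chaining but casting the weight-gap estimate in the \emph{conditional} form that Condition~\ref{expected_bound} demands: the geometric-recurrence argument most naturally controls the unconditional (or history-averaged) gap, and since the Tassiulas random policy carries internal memory, that memory has to be carried in the Markov state, and one must argue---as in \cite{Tassiulas_random}---that recurrence to a MaxWeight-matching configuration supplies the estimate in precisely the form consumed by the drift analysis behind Theorem~\ref{Adaptive_Expected}. A secondary point to verify is that the reconfiguration windows (length $\Delta_r$, zero departures) and the end-of-slot arrival convention do not break inequalities (i)--(ii); they do not, since both use only that $\mathbf{D}(t)$ lies in $\{0,1\}^{N\times N}$ with at most $N$ nonzero entries and that $0\leq A_{ij}(t)\leq A_{\max}$.
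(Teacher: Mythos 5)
Your proposal is correct and follows the same route as the paper: verify that the Tassiulas random policy satisfies Condition~\ref{expected_bound} with a finite constant $G$ and then invoke Theorem~\ref{Adaptive_Expected}. The only difference is that the paper obtains $G = 2N\frac{1-\epsilon}{\epsilon}$ by citing \cite{Tassiulas_random_analysis}, whereas you derive an analogous bound ($G = N(A_{\max}+1)/\delta$) from the geometric recurrence of MaxWeight draws; either finite constant suffices, and the conditional-form subtlety you flag at the end is precisely what the paper outsources to that citation.
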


\begin{proof}
Let $\Pr \{ \mathbf{Z}(t) = \mathbf{\Pi}^*(t) \} \geq \epsilon > 0$ for all $t$ and for some $\epsilon$. In \cite{Tassiulas_random_analysis} the Tassiulas random policy is shown to satisfy Condition \ref{expected_bound} with $G = 2N \frac{1-\epsilon}{\epsilon}$, hence by Theorem \ref{Adaptive_Expected} it achieves throughput optimality.
\end{proof}

\subsection{Generalization}

In the previous subsection, we established the throughput optimality of the $g$-adaptive variant of any scheduling policy that has schedule weight close to the MaxWeight policy, either in deterministic or expected sense. In the literature the idea of MaxWeight policy has been generalized to a broader class of MaxWeight-$f$ policy by extending the definition of schedule weight to a more general $f$-weight. This extended class of policies have been shown to achieve throughput optimality. In this subsection we briefly discuss the criteria for the adaptive variants of these policies to achieve throughput optimality.

We first introduce the MaxWeight-$f$ policy. Given a function $f: \bbbr \rightarrow \bbbr$ which is strictly increasing, and $f(0) = 0$, we define the $f$-weight of a schedule $\mathbf{S} \in \mathcal{S}$ at time $t$ as 
\[
W_{f, \mathbf{S}}(t)
= \Big\langle f(\mathbf{L}(t)), \mathbf{S} \Big\rangle 
= \sum_{i=1}^N\sum_{j=1}^N f(L_{ij}(t))S_{ij}(t).
\]
The MaxWeight-$f$ policy is a generalization of the MaxWeight policy in that it selects the schedule $\mathbf{\Pi}_f^*(t)$ that has the maximum $f$-weight among the feasible schedules, \ie
\[
\mathbf{\Pi}_f^*(t) = \arg\max_{\mathbf{S} \in \mathcal{S}} \sum_{i=1}^N\sum_{j=1}^N f(L_{ij}(t))S_{ij}(t).
\]
We denote the $f$-weight of the MaxWeight-$f$ schedule at time $t$ as $W_f^*(t) = \arg\max\limits_{\mathbf{S} \in \mathcal{S}} \sum_{i=1}^N\sum_{j=1}^N f(L_{ij}(t))S_{ij}(t)$.

Recall that we construct the $g$-adaptive policies using schedule weight comparison, here we generalize the approach by using the more general $f$-weight comparison, and refer to it as the $(g,f)$-adaptive policies. Formally, given a Markov scheduling policy $\pi$ and current state $X_t$, let $\mathbf{\Pi}(t) = \pi(X_t)$ denote the schedule generated by $\pi$ at time $t$. Let $f(\cdot)$ be the weight function. We then define the following:
\begin{itemize}
\item $W_f^{\pi}(t) = W_{f, \mathbf{\Pi}(t)}(t)$ is the $f$-weight of the schedule $\mathbf{\Pi}(t)$ at time $t$
\item $\tilde{W}_f(t) = W_{f, \mathbf{S}(t-1)}(t)$ is the $f$-weight of the previous schedule $\mathbf{S}(t-1)$ at time $t$
\end{itemize}

We then define the \textbf{$(g,f)-$adaptive variant} of $\pi$, denoted as $\pi^{g,f}$, with the schedule $\mathbf{\Pi}^{g,f}(t)$ determined as follows:
\begin{align}
\mathbf{\Pi}^{g,f}(t)= \left\{
\begin{array}{ll}
\mathbf{S}(t-1) & \mbox{if } \Delta W_f(t) \leq g(W_f^{\pi}(t)) \\
\mathbf{\Pi}(t) & \mbox{if } \Delta W_f(t) > g(W_f^{\pi}(t))
\end{array}
\right.
\end{align}
where $\Delta W_f(t) =  W_f^{\pi}(t) - \tilde{W}_f(t)$ is the $f$-weight difference between the schedule $\mathbf{\Pi}(t)$ and the previous schedule $\mathbf{S}(t-1)$.

An example of the $(g,f)$-adaptive policies is given below.

\begin{example}{(the $(g,f)$-adaptive MaxWeight-$\alpha$ policy, $\alpha > 0$)} \\
\indent For any fixed $\alpha > 0$, let the weight function be $f(x) = x^{\alpha}$, then the MaxWeight-$f$ policy is also referred as MaxWeight-$\alpha$ policy in the literature. Under the MaxWeight-$\alpha$ policy, the schedule at time $t$ is given by
\[
\mathbf{\Pi}_{MW\alpha}(t) = \arg\max_{\mathbf{S} \in \mathcal{S}} \sum_{i=1}^N\sum_{j=1}^N L_{ij}^{\alpha}(t)S_{ij}(t)
\]
Recall the hysteresis function $g(\cdot)$ is a strictly increasing and sublinear function. While to ensure the throughput optimality, further restrictions on $g(\cdot)$ would be required, which may be dependent on the weight function $f(\cdot)$. This is shown in the following proposition.
\label{adaptive_maxweight_alpha}
\end{example}

\begin{proposition}
Given the weight function $f(x) = x^{\alpha}$ with $\alpha > 0$. Suppose the sublinear function $g(\cdot)$ satisfies $\lim\limits_{x \rightarrow \infty} \frac{x^{\alpha-1}}{g(x^{\alpha})} = 0$, then the $(g,f)$-adaptive variant of the MaxWeight-$\alpha$ in example~\ref{adaptive_maxweight_alpha} achieves throughput optimality. 
\end{proposition}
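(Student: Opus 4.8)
The plan is to re-run the Lyapunov drift argument behind Theorem~\ref{Adaptive_Approximation}, but with the ordinary schedule weight replaced everywhere by the $f$-weight $W_{f,\mathbf{S}}(t)=\sum_{i,j}L_{ij}^{\alpha}(t)S_{ij}(t)$ and with the quadratic Lyapunov function replaced by the standard MaxWeight-$\alpha$ function $V(\mathbf{L})=\frac{1}{\alpha+1}\sum_{i,j}L_{ij}^{\alpha+1}$ (whose derivative is exactly $f$, and which makes MaxWeight-$\alpha$ throughput optimal when $\Delta_r=0$). Fix a large constant $T$, to be chosen at the end, and telescope $V(\mathbf{L}(t+T))-V(\mathbf{L}(t))=\sum_{\tau=t}^{t+T-1}\bigl(V(\mathbf{L}(\tau+1))-V(\mathbf{L}(\tau))\bigr)$. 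A routine convexity/mean-value estimate bounds each one-slot increment above by $\langle f(\mathbf{L}(\tau)),\mathbf{A}(\tau)-\mathbf{D}(\tau)\rangle$ plus a nonnegative remainder of order $(1+\max_{i,j}L_{ij}(\tau))^{(\alpha-1)^{+}}$ per queue, where $(\alpha-1)^{+}=\max(\alpha-1,0)$; since inside the window every queue changes by at most $A_{\max}+1$ per slot, this remainder — and, likewise, any discrepancy between an $f$-weight evaluated at $\mathbf{L}(\tau)$ versus at $\mathbf{L}(t)$ — is $O\!\bigl(N^{2}T(1+\|\mathbf{L}(t)\|_{\infty})^{(\alpha-1)^{+}}\bigr)$, which is $o(\|\mathbf{L}(t)\|_{\infty}^{\alpha})$ because $(\alpha-1)^{+}<\alpha$ for every $\alpha>0$.

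I would then bound the conditional expectation of $\sum_{\tau}\langle f(\mathbf{L}(\tau)),\mathbf{A}(\tau)-\mathbf{D}(\tau)\rangle$ using the two ingredients already present in the proof of Theorem~\ref{Adaptive_Approximation}. First, since here $\pi$ \emph{is} the MaxWeight-$\alpha$ policy, $W_f^{\pi}(\tau)=W_f^{*}(\tau)$, so at every slot $\tau$ that is not a reconfiguration slot the adaptive rule forces $\langle f(\mathbf{L}(\tau)),\mathbf{D}(\tau)\rangle=W_{f,\mathbf{S}(\tau)}(\tau)\geq W_f^{*}(\tau)-g\bigl(W_f^{*}(\tau)\bigr)$ (the ``keep'' branch gives $\Delta W_f(\tau)\leq g(W_f^{*}(\tau))$, the ``switch'' branch attains $W_f^{*}(\tau)$), while at a reconfiguration slot $\mathbf{D}(\tau)=0$. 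Second, conditioning slot by slot gives $\mathbb{E}[\langle f(\mathbf{L}(\tau)),\mathbf{A}(\tau)\rangle\mid\mathbf{L}(t)]=\mathbb{E}[\langle f(\mathbf{L}(\tau)),\boldsymbol{\lambda}\rangle\mid\mathbf{L}(t)]$, and admissibility of $\boldsymbol{\lambda}$ with load $\rho<1$ yields $\epsilon>0$ and $\boldsymbol{\mu}\in\mathrm{Conv}(\mathcal{S})$ with $\boldsymbol{\mu}\geq(1+\epsilon)\boldsymbol{\lambda}$ componentwise, hence $W_f^{*}(t)\geq(1+\epsilon)\langle f(\mathbf{L}(t)),\boldsymbol{\lambda}\rangle$ since $f(\mathbf{L}(t))\geq0$. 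Writing $R_t$ for the number of reconfiguration slots in $[t,t+T)$ and collecting the terms — and using that, under the mild assumption that each edge lies in some feasible schedule and (as usual) that $\lambda_{ij}>0$, both $W_f^{*}(t)$ and $\langle f(\mathbf{L}(t)),\boldsymbol{\lambda}\rangle$ are $\Theta(\|\mathbf{L}(t)\|_{\infty}^{\alpha})$, so that $T\,g(W_f^{*}(t))$ and the Taylor remainder are $o(\langle f(\mathbf{L}(t)),\boldsymbol{\lambda}\rangle)$ — the $T$-step drift is, for $\|\mathbf{L}(t)\|$ large, bounded above by
\begin{align*}
-\,\tfrac{\epsilon}{2}\,T\,\langle f(\mathbf{L}(t)),\boldsymbol{\lambda}\rangle\;+\;R_t\,W_f^{*}(t)\;+\;o\!\bigl(\langle f(\mathbf{L}(t)),\boldsymbol{\lambda}\rangle\bigr).
\end{align*}

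The decisive step — which I expect to be the main obstacle, exactly as it is in Theorem~\ref{Adaptive_Approximation} — is to show that the reconfiguration term $R_t\,W_f^{*}(t)$ is asymptotically negligible, i.e.\ that the rate of reconfiguration vanishes as the backlog grows. The key estimate is a lower bound on the holding time between consecutive reconfigurations: right after a reconfiguration the schedule in force equals the MaxWeight-$\alpha$ schedule, so the gap $\Gamma(\tau)=W_f^{*}(\tau)-W_{f,\mathbf{S}(\tau)}(\tau)$ starts at $0$; since each queue changes by at most $A_{\max}+1$ per slot, the $f$-weight of any fixed schedule, and hence also $W_f^{*}$, moves by at most $\delta_{\max}\leq C(1+\max_{i,j}L_{ij})^{(\alpha-1)^{+}}$ per slot, so $\Gamma$ grows by at most $2\delta_{\max}$ per slot; a new reconfiguration is triggered only once $\Gamma>g(W_f^{*})$, so the holding time is at least $m_{\min}\gtrsim g(W_f^{*})/\delta_{\max}\asymp g\bigl(c\,\|\mathbf{L}\|_{\infty}^{\alpha}\bigr)/(1+\|\mathbf{L}\|_{\infty})^{(\alpha-1)^{+}}$ (a short two-case argument disposes of the alternative that the maximal queue drains substantially during the holding period, in which case the holding time is of order $\|\mathbf{L}\|_{\infty}$ anyway). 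After substituting $x$ for $\|\mathbf{L}\|_{\infty}$ and absorbing the constant $c$ by monotonicity of $g$, the hypothesis $\lim_{x\to\infty}x^{\alpha-1}/g(x^{\alpha})=0$ is exactly the statement $m_{\min}\to\infty$; for $0<\alpha\leq1$ the exponent $(\alpha-1)^{+}$ equals $0$ and the hypothesis is subsumed by the standing requirement that $g$ be unbounded (implicit already in the use of $g^{-1}$ in Theorem~\ref{Adaptive_Approximation}), so the condition bites only for $\alpha>1$. Consequently, once $\|\mathbf{L}(t)\|$ is large enough that $m_{\min}>T+\Delta_r$, at most one reconfiguration event overlaps $[t,t+T)$, so $R_t\leq\Delta_r+1$ and $R_t\,W_f^{*}(t)=O(\|\mathbf{L}(t)\|_{\infty}^{\alpha})$ with a constant proportional to $\Delta_r$ but independent of $T$.

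It then remains to tie the constants together in the right order: fix $\epsilon$ from admissibility; choose $T$ large enough (depending on $\epsilon$, $N$, $\Delta_r$ and the smallest positive entry of $\boldsymbol{\lambda}$) that $(\Delta_r+1)\,W_f^{*}(t)\leq\tfrac{\epsilon}{4}T\langle f(\mathbf{L}(t)),\boldsymbol{\lambda}\rangle$ for all large $\|\mathbf{L}(t)\|$; then pick a threshold $B$ so that for $\|\mathbf{L}(t)\|>B$ both $m_{\min}>T+\Delta_r$ (forcing $R_t\leq\Delta_r+1$) and the $o(\cdot)$ terms are at most $\tfrac{\epsilon}{4}T\langle f(\mathbf{L}(t)),\boldsymbol{\lambda}\rangle$. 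Then for every state with $\|\mathbf{L}(t)\|>B$ the $T$-step drift is at most $-\tfrac{\epsilon}{4}T\langle f(\mathbf{L}(t)),\boldsymbol{\lambda}\rangle\leq-c'\|\mathbf{L}(t)\|^{\alpha}$ for some $c'>0$ (by norm equivalence), and on $\{\|\mathbf{L}(t)\|\leq B\}$ it is bounded by a constant; since $\mathbf{L}$ has bounded per-slot increments, applying the Foster--Lyapunov theorem to the $T$-step chain and lifting to all slots — exactly as in the proof of Theorem~\ref{Adaptive_Approximation} — yields $\limsup_{t\to\infty}\tfrac{1}{t}\sum_{\tau=0}^{t-1}\mathbb{E}\|\mathbf{L}(\tau)\|<\infty$, i.e.\ strong stability, hence throughput optimality of the $(g,f)$-adaptive MaxWeight-$\alpha$ policy.
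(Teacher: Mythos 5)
Your overall strategy is exactly the one the paper intends (its ``proof'' is the single sentence that one should rerun Theorem~\ref{Adaptive_Approximation} with the Lyapunov function $V(\mathbf{L})=\sum_{i,j}F(L_{ij})$, $F(x)=\int_0^x f(y)\,dy$, all details omitted): you compute the $T$-step drift of $\sum_{i,j}L_{ij}^{\alpha+1}/(\alpha+1)$, reuse the decomposition into a $(1-\rho)$-type negative term, a hysteresis term $g(W_f^*)$, and a reconfiguration-overhead term, and you correctly locate where the hypothesis $\lim_{x\to\infty}x^{\alpha-1}/g(x^{\alpha})=0$ enters: the per-slot motion of any $f$-weight is now $O\bigl(N\|\mathbf{L}\|_{\infty}^{(\alpha-1)^{+}}\bigr)$ rather than $O(N)$, so the holding time between reconfigurations in the analogue of Lemma~\ref{lemma_frequency} is of order $g(W_f^*)/\|\mathbf{L}\|_{\infty}^{(\alpha-1)^{+}}$, and the stated hypothesis is precisely what forces it to diverge (and is vacuous for $\alpha\leq 1$, as you note). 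That is the right reconstruction of the omitted argument.

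Two steps need repair. First, you route the negative drift through $\langle f(\mathbf{L}(t)),\boldsymbol{\lambda}\rangle$ and need this quantity to be $\Theta(\|\mathbf{L}(t)\|_{\infty}^{\alpha})$, which forces the extra assumption $\lambda_{ij}>0$ for all $i,j$; admissible traffic need not satisfy this, and a queue with $\lambda_{ij}=0$ can still be large at time $t$, in which case your final comparison $(\Delta_r+1)W_f^{*}(t)\leq\frac{\epsilon}{4}T\langle f(\mathbf{L}(t)),\boldsymbol{\lambda}\rangle$ fails. The fix is the paper's own inequality (\ref{Birkhoff}): write $\boldsymbol{\lambda}=\rho\sum_i\alpha_i\mathbf{S}_i$, bound $\langle f(\mathbf{L}(t)),\boldsymbol{\lambda}-\mathbf{\Pi}_f^*(t)\rangle\leq-(1-\rho)W_f^*(t)$, and measure every term (hysteresis, Taylor remainder, reconfiguration overhead) against $W_f^*(t)\geq\frac{1}{N}\sum_{i,j}f(L_{ij}(t))$; no positivity of $\boldsymbol{\lambda}$ is then needed. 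Second, and more substantively, for $0<\alpha<1$ your terminal drift bound is $-c'\|\mathbf{L}(t)\|^{\alpha}+K$, and Fact~\ref{foster} then yields only $\limsup_{t}\mathbb{E}\bigl[\|\mathbf{L}(t)\|^{\alpha}\bigr]<\infty$, i.e.\ positive recurrence and a bounded $\alpha$-th moment --- not the strong stability $\limsup_{t}\frac{1}{t}\sum_{\tau}\mathbb{E}\|\mathbf{L}(\tau)\|<\infty$ that the paper's definition of throughput optimality demands. Your last sentence asserts the first-moment bound directly, which is a non sequitur in that regime (for $\alpha\geq1$ it is fine, since $x^{\alpha}\geq x$ for $x\geq1$). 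Closing this for $\alpha<1$ requires an additional step, e.g.\ a Lyapunov function whose drift is genuinely $-\epsilon\|\mathbf{L}\|$ or a separate moment-lifting argument; the paper's one-line sketch shares this gap, but your write-up should flag it rather than claim the conclusion outright.
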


The proof requires modifying the proof of Theorem \ref{Adaptive_Approximation} by taking the Lyapunov function as $V \left( \mathbf{L} \right) = \sum_{i,j=1}^N F(L_{ij})$ where $F(x) = \int_0^x f(y) dy$, and is omitted here for brevity.

\section{Discussion}\label{related}
In this section, we first give a brief explanation on how the adaptive policies achieve the throughput optimality. We then categorize policies dealing with nonzero reconfiguration delay into two classes, namely quasi-static and dynamic policies.

\begin{figure}
\centering
\includegraphics[width = 3.0in]{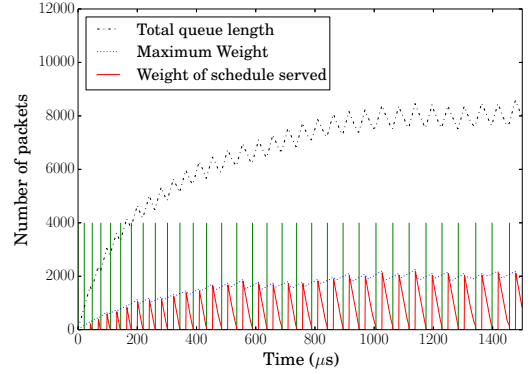}
\caption{Sample paths of the total queue length, maximum weight, and the schedule weight, under the Adaptive MaxWeight (AMW) policy. The number of ToRs is $N = 8$, the traffic load is $\rho = 0.6$, and the parameters of the AMW policy are $\delta = 0.01, \gamma = 0.1$. The green lines mark the schedule reconfiguration time instances $t_k^S$.}
\label{sample_path}
\end{figure}

\subsection{Queue Dynamics under the Adaptive MaxWeight Policy}
Fig. \ref{sample_path} illustrates the sample paths of the total queue lengths, the maximum weight, and the schedule weight of a network under the Adaptive MaxWeight policy. The green vertical lines mark the times of schedule reconfiguration, $t_k^S$, and the schedule weight is set as zero during the reconfiguration delay (time period of length $\Delta_r$ following each schedule reconfiguration instance $t_k^S$). We first observe that the schedule duration increases as the total queue length increases. This is an essential component in achieving the network stability: let $T$ be the mean schedule duration, then in order to ensure stability under $\Delta_r > 0$, the rate of schedule reconfiguration must satisfy $1 - \frac{\Delta_r}{T} > \rho$. The schedule duration increases with the queue length until this condition is satisfied.

A somewhat more interesting observation is in the mechanism of this schedule duration adjustment. Recall that in the construction of $(g, f)$-adaptive policy, it requires no explicit adjustment of the schedule duration, \ie the duration of a schedule is not explicitly set at the time it is reconfigured. Instead, through the schedule determination by weight comparison at each time slot, the schedule duration is implicitly ``adapted'' to the appropriate value. After each schedule reconfiguration, the schedule weight begins from the maximum weight and decreases as the network is serving the queues associate with the schedule. When the total queue length is larger (and thus the maximum weight is larger since $W^*(t) \geq \frac{1}{N} ||\mathbf{L}(t)||$), the threshold is larger and it takes longer for the weight difference between the maximum weight and the schedule weight to surpass the threshold. This property may be characterized by lemma \ref{lemma_frequency} shown in Appendix \ref{theorem_adaptive_approximation_proof}. 

Note that this mechanism differs from other scheduling policies in the literature that explicitly adjust the schedule duration (\cite{adaptive_batch}, \cite{mordia}, \cite{VFMW}). We give a brief introduction of these policies and categorize them based on this schedule provisioning behavior in the next subsection. The performance comparison of these policies is then evaluated through simulations and presented in the next section.

\subsection{Quasi-static v.s. Dynamic Policies}

For scheduling policies accounting for nonzero reconfiguration delay, we classify them into two categories: ``quasi-static scheduling'' and ``dynamic scheduling.'' Quasi-static scheduling policies, also refered to as ``batch scheduling,''~\cite{adaptive_batch} select a series of schedules based on a single schedule computation process, as shown in Figure~\ref{time} (a). We argue that under these policies, the generated schedules may depend on very out-dated information, especially for schedules employed later in a batch. This is the source of significant performance degradation. In contrast, under dynamic scheduling policies, each schedule is generated based on the most up-to-date queue information, as shown in Figure~\ref{time} (b). We now describe several example policies for each class and discuss their performance.

In \cite{adaptive_batch}, batch scheduling policies are further classified as fixed batch scheduling (FBS) policies or adaptive batch scheduling (ABS) policies depending on how to determine the time instances for schedule computation. If the time between two schedule computation is a fixed duration then it is a FBS policy, otherwise it is a ABS policy. In general, a FBS policy does not guarantee stability unless the number of schedules employed within a batch is restricted to avoid too frequent schedule reconfiguration. The traffic matrix scheduling (TMS) policy \cite{mordia} is a special case of FBS policy which could guarantee stability if the traffic load is known in advance. Under the TMS policy, the schedule computation occurs at $t_k = kW$ for some integer parameter $W < \infty$ and $k = 0,1,\dots$ The TMS policy utilizes the Birkhoff von-Neumann (BvN) decomposition \cite{bvn_alg} to determine the schedules in the following $W$ time slots. \footnote{The BvN decomposition is based on the BvN Theorem \cite{bvn_theorem} that every doubly stochastic matrix could be decomposed as a convex combination of permutation matrices} More specifically, the scheduler takes the queue length information $\mathbf{L}(t_k)$ and scales it to a doubly stochastic matrix $\mathbf{B}(t_k)$ which indicates the relative service requirement in the following $W$ slots and performs the BvN decomposition on $\mathbf{B}(t_k)$ as $\mathbf{B}(t_k) = \sum_{i=1}^Q \alpha_i \mathbf{P}_i$. Note the number of terms $Q$ in the decomposition may vary (while $Q \leq N^2-2N+2$). In practice, the parameter $Q$ is set as a fixed number to avoid excessive schedule changes and $Q$ largest weighted schedules are chosen. If the arrival traffic load is known a priori, an appropriate choice of parameters ensures the stability of TMS.

\begin{figure}
\centering
\includegraphics[width = 2.5in]{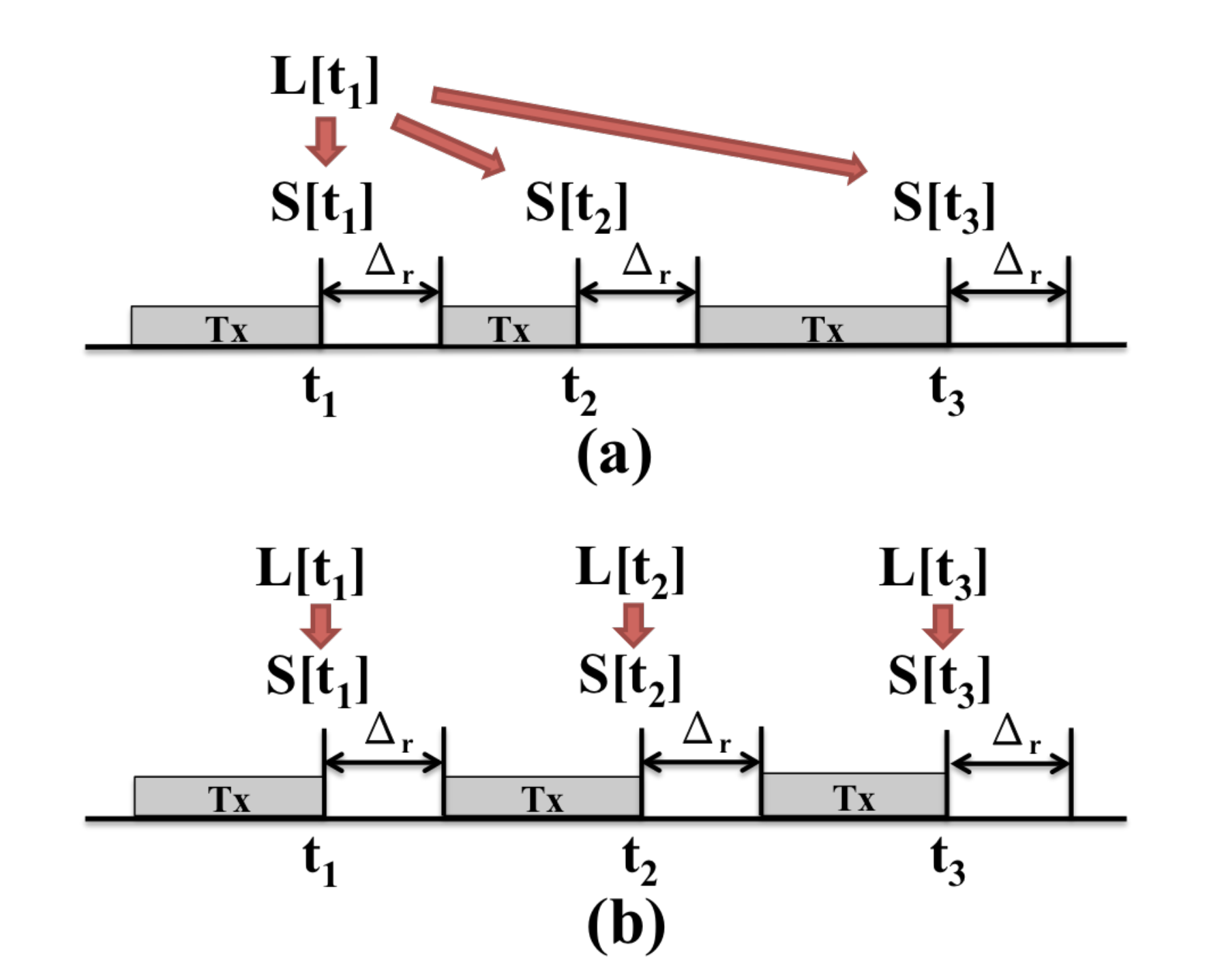}
\caption{Timing diagram for different scheduling strategies. (a) Quasi-static policies: Series of schedules determined in a single schedule computation, and some schedules could depend on out-dated queue information when being deployed.
(b) Dynamic policies: Each schedule is computed based on the most up-to-date edge queue information.}
\label{time}
\end{figure}

The ABS policy proposed in \cite{adaptive_batch} determines each schedule computation time as the time the packets from last batch are cleared. The ABS guarantees rate stability \cite{adaptive_batch} which is a weaker notion of stability compared to the strong stability considered in this paper. In fact, the expected queue length may be unbounded under the ABS policy, hence it is not considered in the performance comparison in this work. On the other hand, the Variable Frame MaxWeight (VFMW) policy proposed in~\cite{VFMW} can be viewed as a variant to the ABS policy. The VFMW policy selects only one schedule for each batch, and the batch duration is a function of the batch size (instead of being the time that the batch is cleared). While the VFMW policy is shown in~\cite{VFMW} to be throughput optimal, it has poor delay performance since it selects and fixes the schedule duration disregarding the arrivals in the schedule duration, which is a similar problem to other quasi-static policies.

In contrast to the quasi-static policies, under the dynamic scheduling policies, the schedule being employed at each schedule reconfiguration instance is based on the most up-to-date queue length information. Frame-based policies, such as the Fixed Frame MaxWeight (FFMW) policy~\cite{FFMW}, are examples for dynamic policies. The FFMW policy selects a fixed period $T$ and sets the schedule reconfiguration times at $t_n = nT, n = 0,1,2,\dots$ At each $t_n$, the schedule is selected as the MaxWeight schedule at time $t_n$, therefore each schedule is based on the most up-to-date queue information and result in an improved delay performance. Unfortunately, like the TMS policy, however, the FFMW requires prior knowledge of the traffic load to guarantee the network stability which is similar to the TMS policy. Note that by the definition given, our proposed $(g,f)$-adaptive policies are other examples of dynamic policies. In contrast to the FFMW, however, our proposed $(g,f)$-adaptive policies achieve throughput optimality without prior knowledge of the traffic load. Moreover, the $(g,f)$-adaptive policies also have good delay performance, as would be shown through simulations in the next section.

\section{Simulation} \label{simulation}
\begin{figure}
\includegraphics[width = 3.2in]{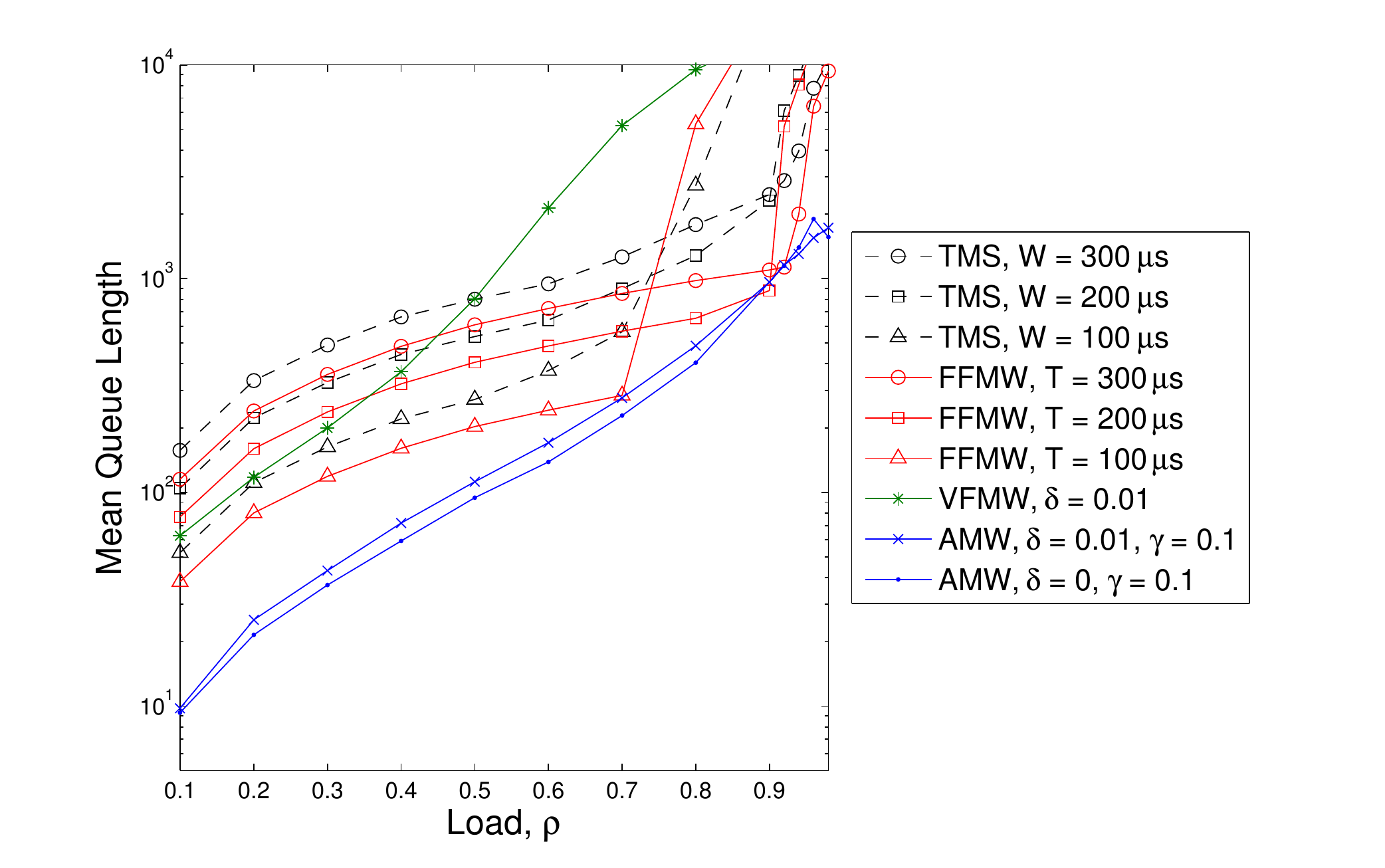}
\caption{Mean queue length versus traffic load $\rho$ under the \textbf{uniform traffic}. The TMS policy reconfigures the schedule $q = 10$ times within $qT$ time duration. The scheduling rate under either the TMS or PMW is equal to $1/T$, while under AMW is adapted to the traffic load intensity.}
\label{L_load_uniform}
\end{figure}

\begin{figure}
\includegraphics[width = 3.2in]{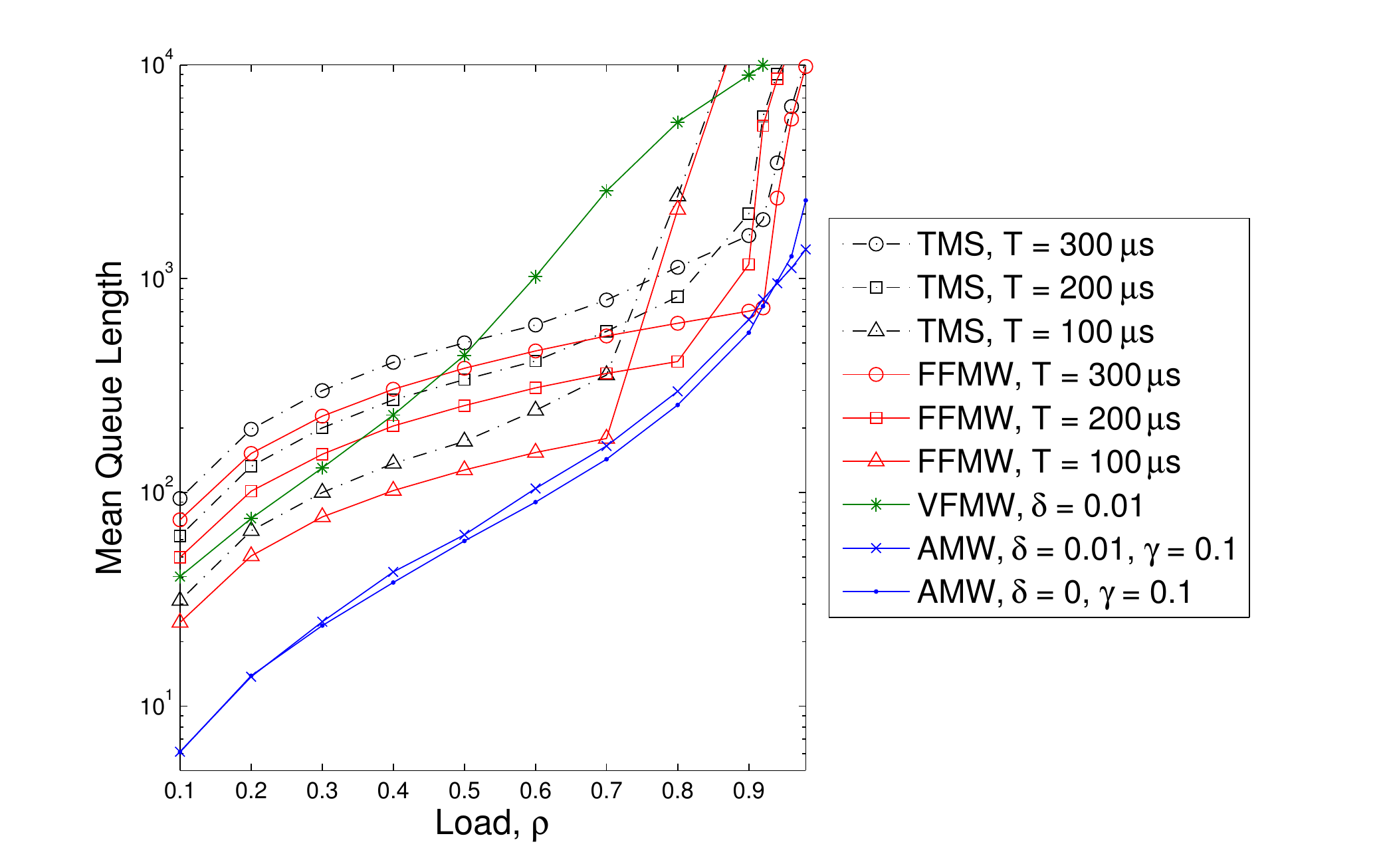}
\caption{Mean queue length versus traffic load $\rho$ under the \textbf{nonuniform traffic}. The TMS policy reconfigures the schedule $q = 10$ times within $qT$ time duration.}
\label{L_load_nonuniform}
\end{figure}

In this section we present simulation results for the AMW policy in Example 1, and compare them to the benchmark scheduling policies such as TMS, FFMW and VFMW. We also present comparison of the AMW policy against adaptive variants of lower complexity policies approximating the MaxWeight policy, as given in the examples in section \ref{adaptive_policies}.

The experiments are conducted with the simulator built for the REACToR switch in \cite{reactor}. The reconfiguration delay is $\Delta_r = 20\ \mu s$. In order to compare scheduling policies in optical switches, we cease the electronic switches in the hybrid switch in \cite{reactor} and only utilize the optical switches. We consider $N = 100$ ToR switches, and the network topology is assumed to be non-blocking. Therefore, the set of feasible schedules $\mathcal{S}$ is in fact the set of $N \times N$ permutation matrices. Each link has data bandwidth $B = 100$ Gbps, and the packets are of the same size $p = 1500$ bytes (each takes $0.12 \mu s$ for transmission). Each edge queue can store up to $1\times 10^5$ packets, and incoming packets are discarded when the queue is full.

The traffic is assumed to be admissible, i.e. $\rho(\boldsymbol{\lambda}) < 1$, while the load matrices $\boldsymbol{\lambda}$ are classified as the following types:
\begin{enumerate}
\item Uniform: $\lambda_{ij} = \rho / N, \ \forall 1 \leq i, j \leq N$.
\item Nonuniform: $\lambda_{ij} = \frac{\rho}{M} \sum_{m=1}^M \mathbf{P}_{ij}^m$ where $\mathbf{P}^m, m = 1, \dots, M \in \mathcal{P}$ are permutation matrices picked at random. The number $M$ determines the skewness of the load matrix. We set $M = 100$ here.
\end{enumerate}

\begin{figure}
\includegraphics[width = 3.2in]{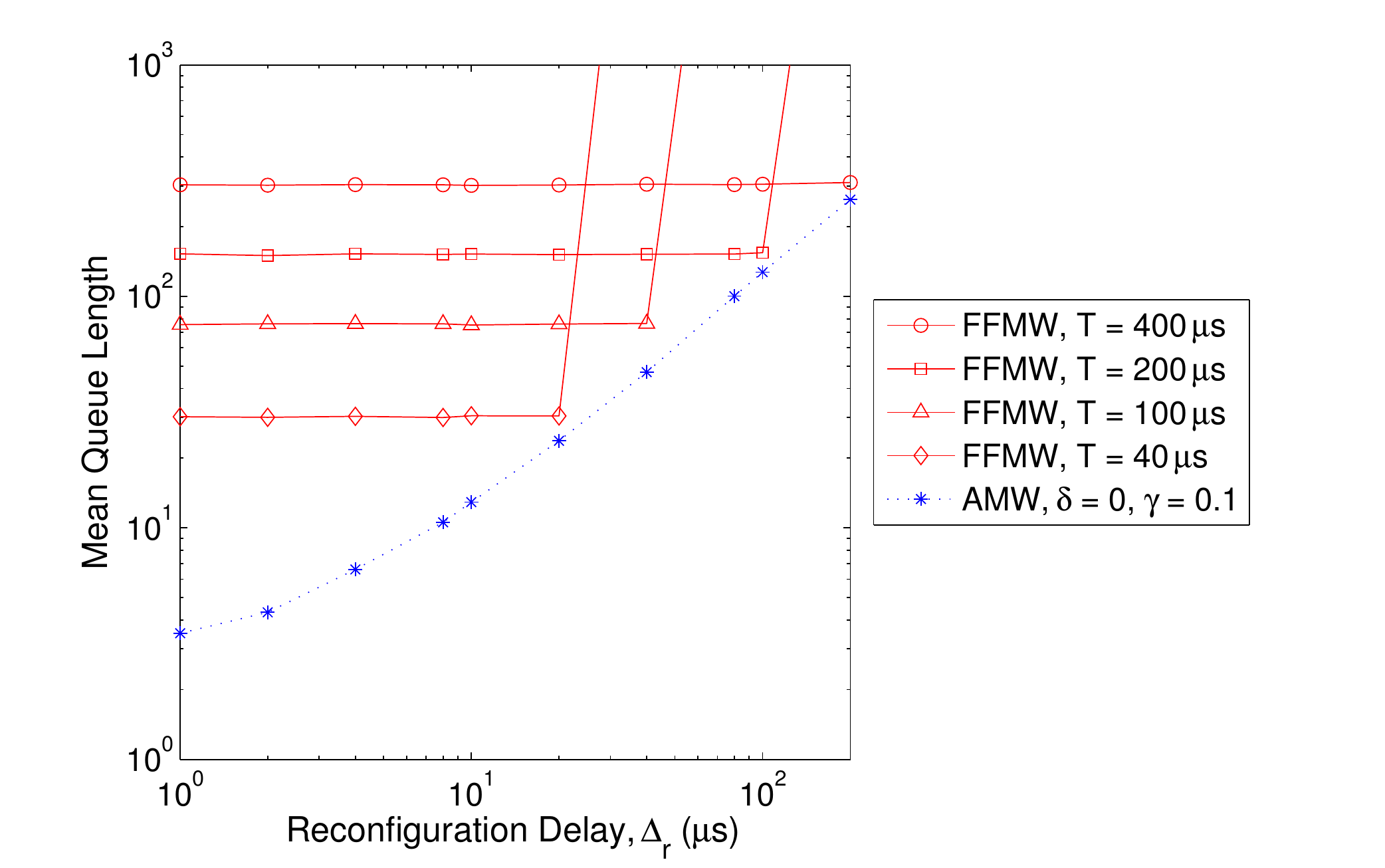}
\caption{Mean queue length versus the reconfiguration delay $\Delta_r$ under the nonuniform traffic. The traffic load is fixed as $\rho = 0.3$.
\newline
}
\label{night}
\end{figure}

The performance measure used is the mean edge queue length (averaged over queues and over time). Notice that the expected average delay of the entire network is linearly related to this quantity according to the Little's law.

In Figs \ref{L_load_uniform} and \ref{L_load_nonuniform}, we compare the scheduling policies described in section \ref{related} under the uniform and the nonuniform traffic, respectively. For TMS, we set the number of schedules used between two schedule computation time instances to be $Q = 10$. In Figs. \ref{L_load_uniform} and \ref{L_load_nonuniform} we can see that the TMS and FFMW perform comparably with the FFMW slightly outperforming the TMS under the same schedule reconfiguration rate $1/T = 10/W$. We note that under both the TMS and FFMW policies, the traffic loads they could stabilize are determined by the reconfiguration rate $1/T = 10/W$. In general, a smaller $T$ ($W$) value gives better delay performance at a fixed stable load, but choosing a smaller $T$ ($W$) value also decreases the maximum load that the TMS or FFMW policy could stabilize. We also note that the VFMW policy, which is shown to achieve the throughput optimality without requiring the knowledge of the arrival traffic, in practice, underperforms both the FFMW and TMS policies. On the other hand, the AMW policy ensures an improved performance over the FFMW and TMS policies and achieves throughput optimality.

\begin{figure}
\includegraphics[width = 3.0in]{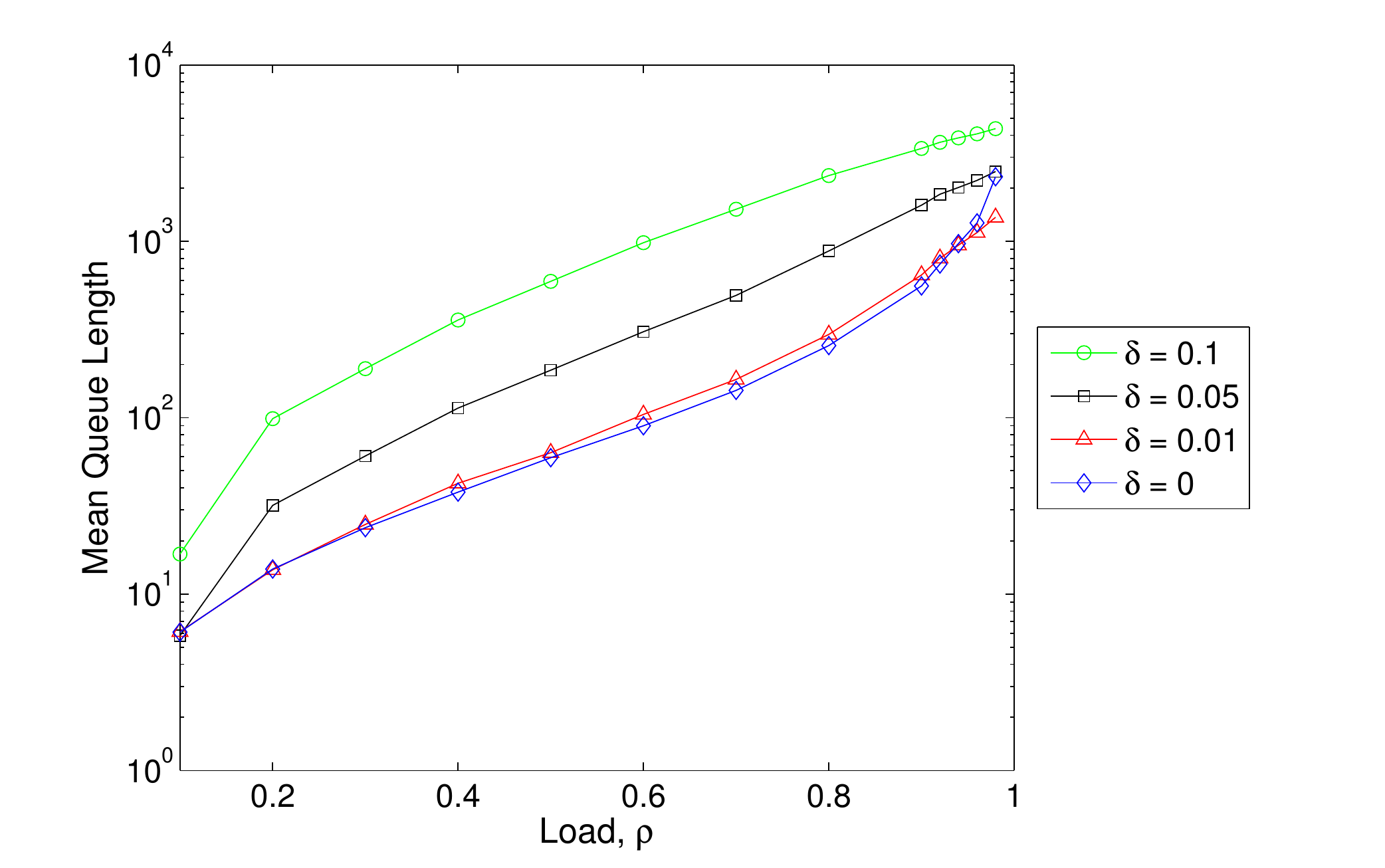}
\caption{Mean queue length versus traffic load for the AMW policy under sublinear exponent $\delta \in \{0, 0.01, 0.05, 0.1 \}$. The threshold ratio is fixed as $\gamma = 0.1$.}
\label{L_load_sublinear}
\end{figure}

\begin{figure}
\includegraphics[width = 3.0in]{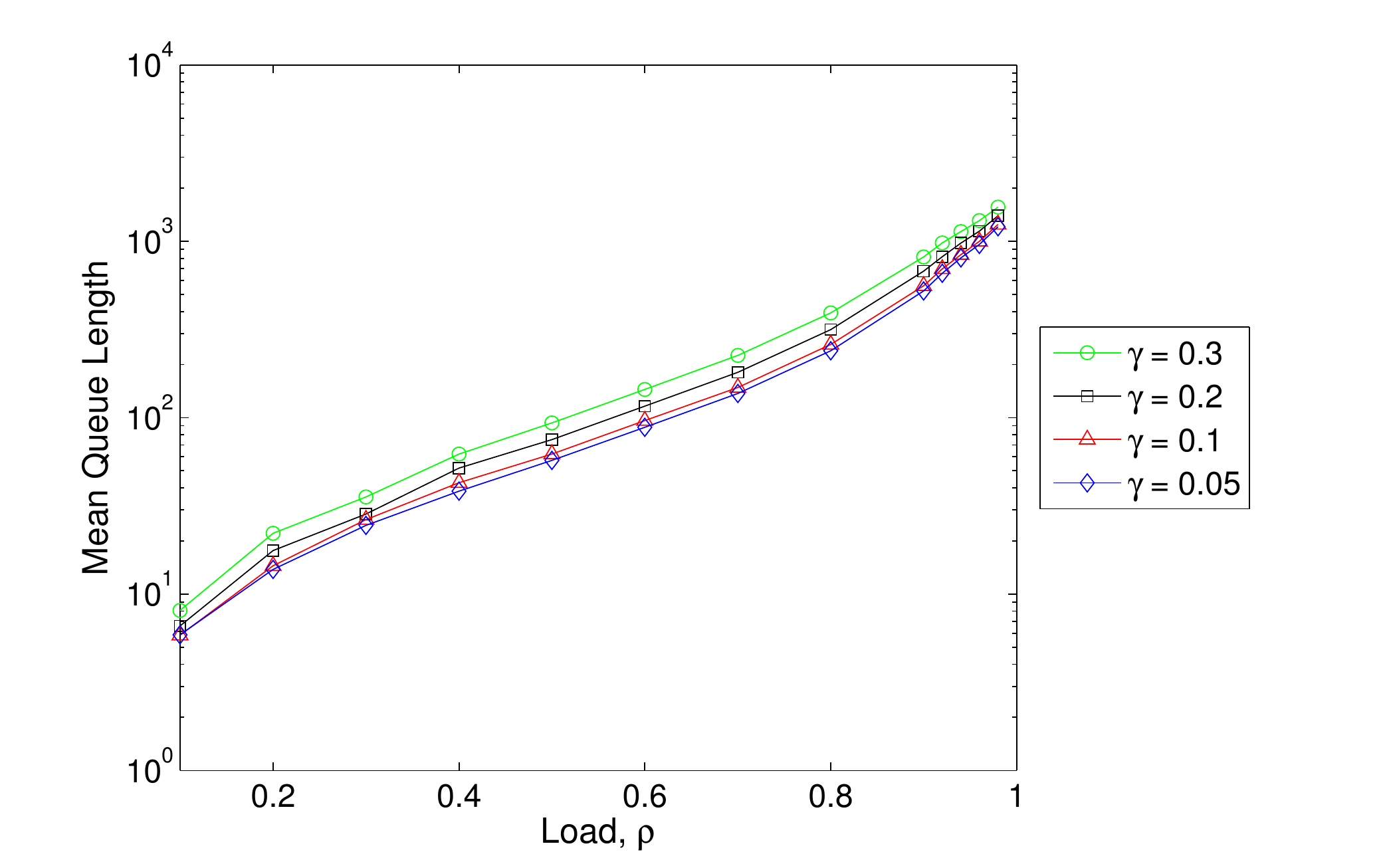}
\caption{Mean queue length versus traffic load for the AMW policy under threshold ratio $\gamma \in \{0.05, 0.1, 0.2, 0.3 \}$. The sublinear exponent is fixed as $\delta = 0.01$.}
\label{L_load_threshold}
\end{figure}

We now consider the effect of the reconfiguration delay $\Delta_r$ to the performance. In Fig. \ref{night}, we show the performance of the FFMW and AMW under various $\Delta_r$, while the traffic load is fixed as $\rho = 0.3$. We can see that the AMW outperforms the FFMW under each $\Delta_r$ value, regardless of the parameter selection of the FFMW policy. Although there exists an optimal schedule reconfiguration period $T$ of the FFMW policy that achieves comparable performance to the AMW policy at each $\Delta_r$, the choice of the optimal $T$ is dependent on the traffic load $\rho$. We can see that the performance of the AMW actually traces the optimal performance of the FFMW. This observation suggests that the adaptive strategy of the AMW in fact allows it to capture the optimal schedule reconfiguration rate based solely on the queue lengths information and no prior knowledge of the arrival statistics is required.

In Figs. \ref{L_load_sublinear} and \ref{L_load_threshold} we consider the effect of the parameter selection for the AMW policy. Recall from example \ref{adaptive_maxweight} that the $g$ function is selected as $g(x) = (1-\gamma)x^{1-\delta}$. Fig. \ref{L_load_sublinear} shows the performance for sublinear exponent $\delta \in \{0, 0.01, 0.05, 0.1 \}$, while the ratio threshold is fixed as $\gamma = 0.1$. Note that at any fixed traffic load $\rho$, the mean queue length becomes shorter when $\delta$ is smaller. Since our analysis (Corollary~\ref{coro1}) requires $\delta > 0$, we select $\delta = 0.01$ for the remaining simulations. Fig. \ref{L_load_threshold} presents the performance under the ratio threshold $\gamma \in \{0.05, 0.1, 0.2, 0.3 \}$, while the sublinear exponent is fixed as $\delta = 0.01$. Notice that the mean queue length decreases as $\gamma$ decreases, however, this sensitivity to the variable $\gamma$ seems to be fairly insignificant.

\begin{figure}
\psfrag{AMW}{test}
\includegraphics[width = 3.0in]{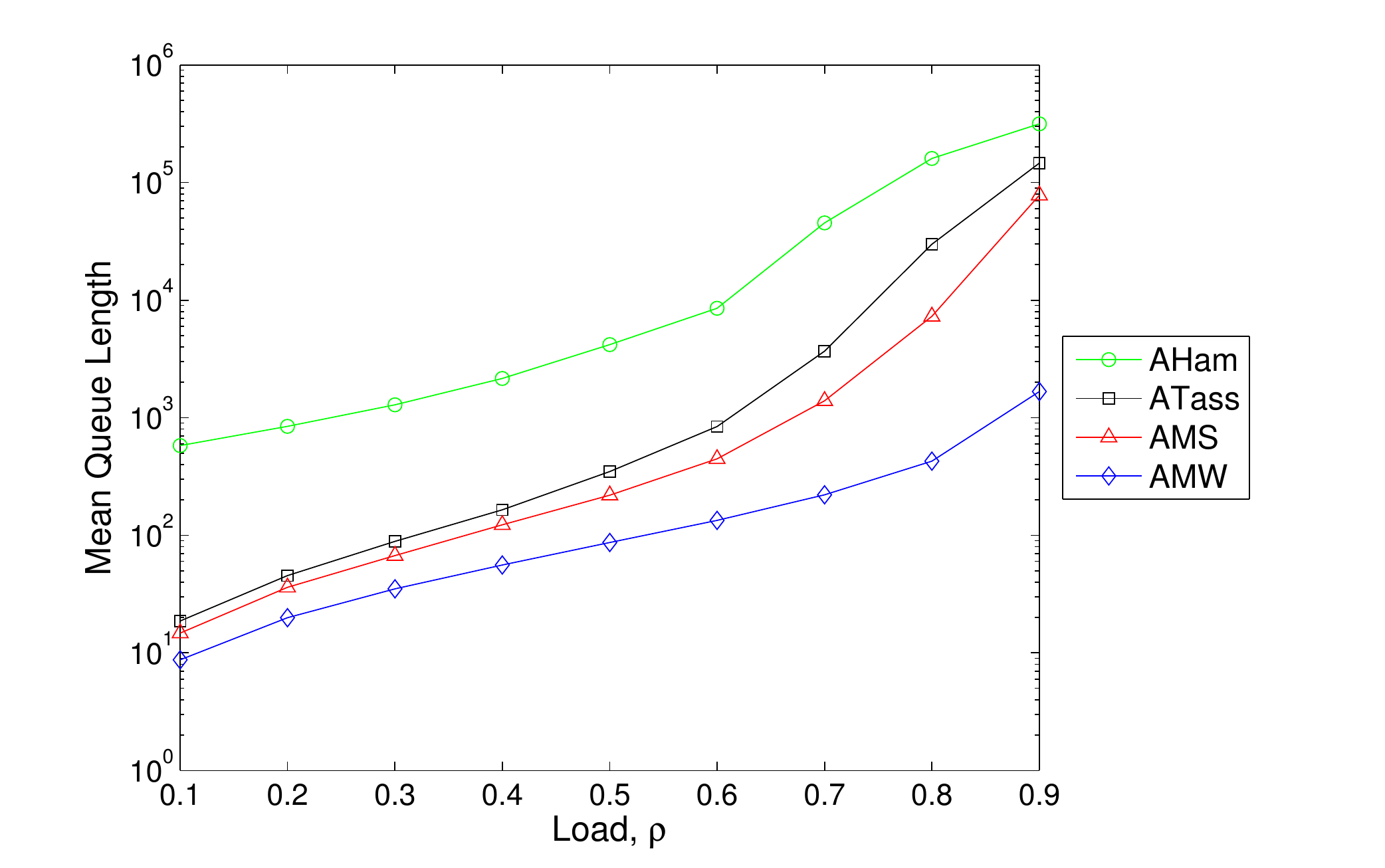}
\caption{Mean queue length versus traffic load for different adaptive policies.  The number of ToRs is $N = 8$ and $g(x) = (1-\gamma)x^{1-\delta}$ where $\gamma = 0.1, \delta = 0.01$.}
\label{L_load_linear_complexity}
\end{figure}

\begin{figure}
\includegraphics[width = 3.0in]{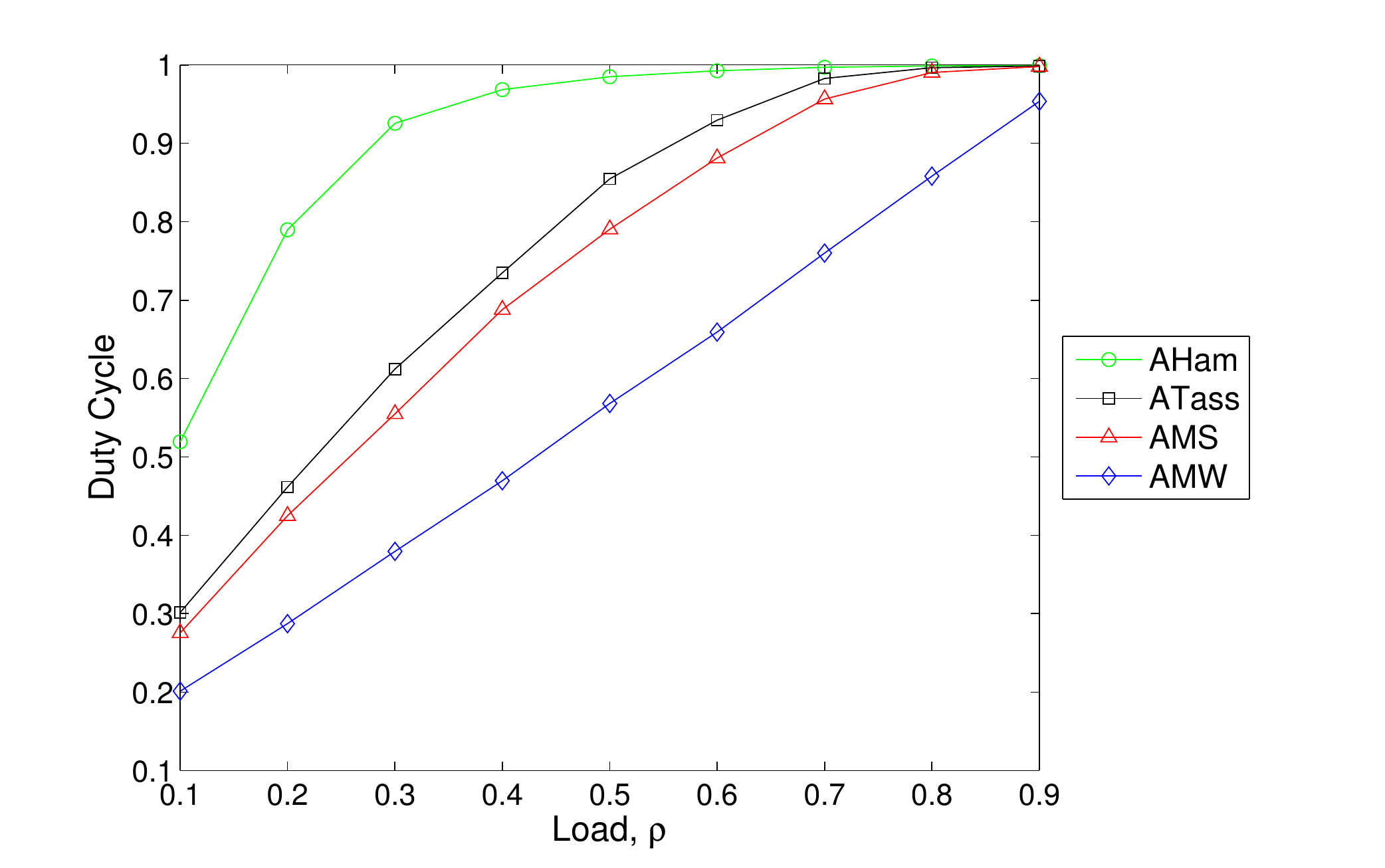}
\caption{Duty cycle versus traffic load for different adaptive policies. The number of ToRs is $N = 8$ and $g(x) = (1-\gamma)x^{1-\delta}$ where $\gamma = 0.1, \delta = 0.01$.}
\label{DC_load_linear_complexity}
\end{figure}

One of the shortcomings of the AMW policy is the complexity of computing the MaxWeight schedules. Next, we consider the adaptive variants of lower complexity scheduling policies introduced in Examples \ref{adaptive_tassiulas}-\ref{adaptive_maxsize}. In Fig. \ref{L_load_linear_complexity} we show the mean queue length versus traffic load for the Adaptive Hamiltonian (AHam), the Adaptive Tassiulas random policy (ATass), the Adaptive Maximum Size (AMS), and the AMW policies, under uniform traffic. Note that the delay performance of the lower complexity policies degrade drastically at large number of ToRs ($G \approx N!$ in Condition~\ref{strict_bound} for AHam and in Condition~\ref{expected_bound} for ATass), hence we set $N = 8$ for the simulations here. We also show the duty cycle of these policies in Fig. \ref{DC_load_linear_complexity}, where the duty cycle is defined as $DC \stackrel{\Delta}{=} 1 - \frac{\Delta_r}{\bbbe \{ T \}}$, while $\bbbe\{ T \}$ is the mean schedule duration. Note that a necessary condition for a scheduling policy to be throughput optimal under reconfiguration delay $\Delta_r > 0$ and traffic load $\rho$, is to satisfy $DC > \rho$, which is satisfied by all the scheduling policies shown here. Interestingly, due to the uniformity of the traffic, the AMS policy stabilizes the network and has comparable delay performance to the ATass policy, despite the fact that the AMS policy is not throughput optimal in general.

From Fig. \ref{L_load_linear_complexity}, we could see that the delay performance for the lower complexity policies are worse than the AMW policy. We may also observe that as the traffic load gets larger, the performance difference increases. This is because the schedule weight decreases in a slower rate (due to higher arrival rate), and it takes more time for the lower complexity policies to find a schedule with high enough weight. We could see in Fig. \ref{DC_load_linear_complexity} that the schedule durations become significantly long ($DC \rightarrow 1$) at high traffic loads. We also see that the AHam policy has the worst delay performance for all traffic loads, this is because the Hamiltonian walk only changes served queues in a time slot, and takes longer to find the next good schedule.

\section{Conclusion} \label{conclusion}
This paper considers the end-to-end scheduling problem in all-optical data center networks. The entire network is viewed as a generalized crossbar interconnect with nonzero schedule reconfiguration delay. Due to the schedule reconfiguration delay, many throughput optimal scheduling policies (under zero reconfiguration delay) in the literature could not be directly applied in this problem.

In this paper, we propose a general method to develop a class of scheduling policies for scheduling with nonzero reconfiguration delay, namely the $(g,f)$-adaptive variant policies: Given any Markov policy $\pi$, a weight function $f(\cdot)$, and a sublinear hysteresis function $g(\cdot)$, we construct a $(g,f)$-adaptive variant of $\pi$ which involves a weight comparison between the current schedule and the schedule generated by $\pi$, and reconfigure to the schedule generated by $\pi$ when it is ``significantly better.'' We show the throughput optimality of the $(g,f)$-adaptive variants of $\pi$ given the weight of schedule generated by $\pi$ is guaranteed to be close enough to the maximum weight (either in the deterministic or the expected sense).

\begin{figure}
\includegraphics[width = 3.0in]{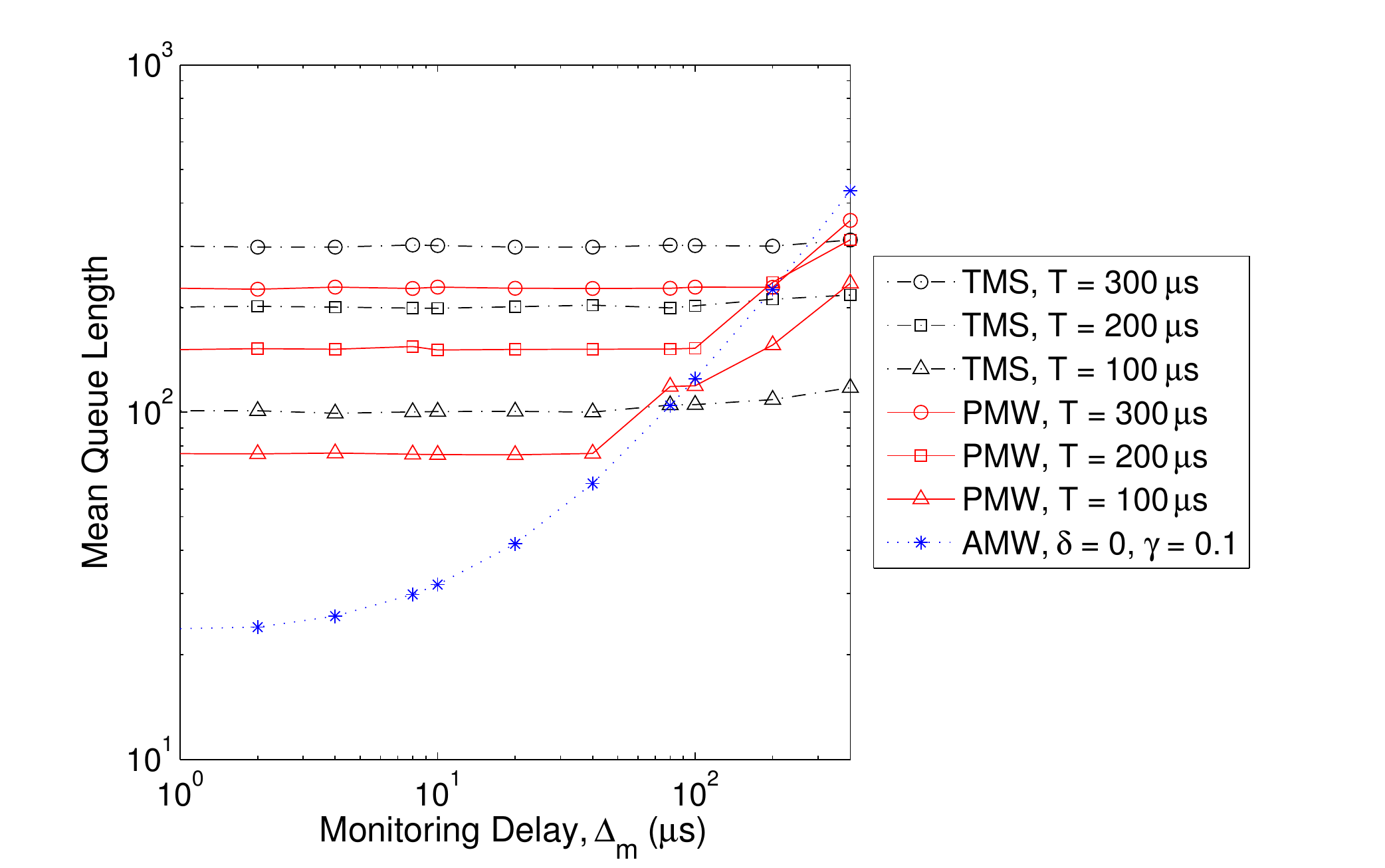}
\psfrag{xlabel}[c]{test}
\caption{Mean queue length versus monitoring delay $\Delta_m$ under nonuniform traffic. The traffic load is fixed as $\rho = 0.3$. The edge queue state is monitored/updated every microsecond, $t_{k+1}^M - t_k^M = 1 \mu$s for all $k$.}
\label{L_monitor}
\end{figure}

The proposed scheduling policies consider zero in-network buffer mainly due to the inherently bufferless nature of the optical circuits. It is interesting to note that the notion of edge-buffering and end-to-end scheduling has also been explored in the regime of electronic packet-switched data center network~\cite{fastpass} recently, in an effort to reduce buffering and congestion within the network. This suggests that our proposed scheduling policies can also be utilized in the context of electronic packet switches (with in-network buffering) in order to further reduce delay and improve performance.

In this work, we consider primarily the effect of the reconfiguration delay to the scheduling policies. In practice, however, monitoring (collecting queue length information) and schedule computation may also take a non-negligible time (denoted as $\Delta_m$ and $\Delta_c$, respectively), as suggested in \cite{Infocom15}. These delays can be viewed as incurring delay on the queue length information as in the Pipelined MaxWeight in example \ref{adaptive_pipeline_maxweight}. In other word, the stability of a $g$-adaptive policy is ensured even for $\Delta_m, \Delta_c > 0$. However, it is easy to see that our conclusion regarding the improved performance of dynamic policies over quasi-static policies does not hold when $\Delta_m$, $\Delta_c$ increase significantly. As an example, Fig. \ref{L_monitor} shows the delay performance under increased monitoring delay $\Delta_m > 0$. In the small $\Delta_m$ regime, the AMW policy achieves substantially better performance over the comparing schedulig policies. However, as $\Delta_m$ increases, the performance of the AMW policy sees a siginificant degradation. This observation motivates future research: 1) the development of low-delay ToR monitoring system~\cite{WiOpt}, 2) lower complexity scheduling policies with comparable delay performance, and 3) improvements to the AMW policy in order to increase the robustness with respect to the monitoring and computation delay.

\section*{Acknowledgment}

This work has been partially supported by L3 Communications, NSF Center for Integrated Access Networks (NSF Grant EEC-0812072), and NSF grant CNS-1329819 Event-Based Information Acquisition, Learning, and Control in High-Dimensional Cyber-Physical Systems.

\appendix

The following Foster Lyapunov Theorem is used to show the stability in the proofs listed in this appendix.

\begin{fact}[Foster-Lyapunov {\cite{Hajek}}]
\label{foster}
Given a system of edge queues $Q_{ij}, 1 \leq i,j\leq N$, with queue lengths $\mathbf{L}(t) = [L_{ij}(t)]$, which could be described by an irreducible discrete-time Markov chain (DTMC) on a countable state space $\mathcal{L}$. Let $f$, $g$ be two nonnegative functions on $\mathcal{L}$ such that $\forall \mathbf{L}(t_k) \in \mathcal{L}$,
\begin{align*}
&\bbbe \left[ V \left( \mathbf{L}(t_{k+1}) \right) - V(\mathbf{L}(t_k)) | \mathbf{L}(t_k) \right]\leq -f \big( \mathbf{L}(t_k) \big) + g \big( \mathbf{L}(t_k) \big)
\end{align*}
and suppose for some $\epsilon > 0$, the set $C = \{ \mathbf{L} \in \mathcal{L} : f(\mathbf{L}) <  g(\mathbf{L}) + \epsilon\}$ is finite, then the DTMC describing the queue length evolution is positive recurrent and we have
\begin{align*}
\lim_{k \rightarrow \infty} \bbbe \left[ f \big( \mathbf{L}(t_k) \big) \right] \leq \lim_{k \rightarrow \infty} \bbbe \left[ g \big( \mathbf{L}(t_k) \big) \right]
\end{align*}
\end{fact}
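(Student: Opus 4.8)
The plan is to recognize this Fact as the classical Foster--Lyapunov stability criterion and to prove it in two stages: first positive recurrence, then the limiting moment inequality. The key observation is that the two-function drift hypothesis collapses to the usual single negative-drift condition as soon as we restrict attention to the complement of the finite set $C$.

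First I would establish positive recurrence. On the set $C^c = \{\mathbf{L} \in \mathcal{L} : f(\mathbf{L}) \geq g(\mathbf{L}) + \epsilon\}$ the drift hypothesis gives
\[
\bbbe\left[ V(\mathbf{L}(t_{k+1})) - V(\mathbf{L}(t_k)) \mid \mathbf{L}(t_k) \right] \leq -f(\mathbf{L}(t_k)) + g(\mathbf{L}(t_k)) \leq -\epsilon,
\]
so outside the finite set $C$ the one-step mean drift of the nonnegative function $V$ is uniformly bounded above by $-\epsilon < 0$; inside $C$ the drift is finite because $C$ is finite and $V,f,g$ are finite-valued (indeed $\bbbe[V(\mathbf{L}(t_{k+1})) \mid \mathbf{L}(t_k)=\mathbf{L}] \leq V(\mathbf{L}) - f(\mathbf{L}) + g(\mathbf{L}) < \infty$ for each $\mathbf{L} \in C$). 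This is exactly Foster's criterion. I would then conclude positive recurrence by the standard supermartingale / optional-stopping argument: letting $\tau_C$ be the first return time to $C$, the stopped process $V(\mathbf{L}(t_{k \wedge \tau_C})) + \epsilon\,(k \wedge \tau_C)$ is a nonnegative supermartingale, which forces $\bbbe[\tau_C \mid \mathbf{L}(t_0)=\mathbf{L}] < \infty$ for every initial state $\mathbf{L}$; together with irreducibility this yields positive recurrence of the chain.

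Second, for the moment inequality I would take total expectations in the drift hypothesis and telescope. Summing
\[
\bbbe[V(\mathbf{L}(t_{k+1}))] - \bbbe[V(\mathbf{L}(t_k))] \leq -\bbbe[f(\mathbf{L}(t_k))] + \bbbe[g(\mathbf{L}(t_k))]
\]
over $k = 0, \dots, K-1$ and discarding the nonnegative terminal term $\bbbe[V(\mathbf{L}(t_K))] \geq 0$ gives
\[
\frac{1}{K}\sum_{k=0}^{K-1} \bbbe[f(\mathbf{L}(t_k))] \leq \frac{\bbbe[V(\mathbf{L}(t_0))]}{K} + \frac{1}{K}\sum_{k=0}^{K-1} \bbbe[g(\mathbf{L}(t_k))].
\]
Letting $K \to \infty$ the initial term vanishes, so the Ces\`aro average of $\bbbe[f]$ is dominated by that of $\bbbe[g]$. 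Finally I would invoke the ergodic theorem for the positive-recurrent irreducible chain established in the first stage: the time averages converge to the common stationary expectations, which coincide with $\lim_k \bbbe[f(\mathbf{L}(t_k))]$ and $\lim_k \bbbe[g(\mathbf{L}(t_k))]$, so the Ces\`aro inequality transfers to the genuine limits and yields the claim.

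The main obstacle is the first stage --- rigorously passing from the negative-drift-outside-a-finite-set condition to positive recurrence. The delicate points are verifying the integrability needed for the optional-stopping argument (so that the supermartingale property and the resulting hitting-time bound are legitimate), which is precisely where the nonnegativity of $V$ and the finiteness of $C$ are essential, and ruling out escape to infinity so that a chain started outside $C$ actually reaches $C$ in finite expected time. A secondary subtlety is the passage from the Ces\`aro limit to the stated limit $\lim_k$ in the moment bound: this requires the stationary limit to exist (guaranteed by aperiodicity, or else the conclusion must be read with $\limsup$ and $\liminf$), and if $g$ has infinite stationary mean the inequality holds trivially.
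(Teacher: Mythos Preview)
The paper does not prove this statement at all: it is stated as a \emph{Fact} with a citation to \cite{Hajek} and is used as a black-box tool in the proofs of Theorems~\ref{Adaptive_Approximation} and~\ref{Adaptive_Expected}. There is therefore no paper proof to compare against.

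Your argument is the standard one and is essentially correct. The reduction of the two-function drift condition to the usual Foster condition via the finite set $C$ is the right observation, the supermartingale/optional-stopping route to positive recurrence is the classical proof, and the telescoping argument for the moment bound is exactly how such inequalities are obtained. You have also correctly flagged the genuine technical caveats (integrability for optional stopping, and the aperiodicity issue for the existence of $\lim_k$ as opposed to $\limsup$/$\liminf$), which the paper's statement glosses over.
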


\subsection{Proof of Theorem \ref{Adaptive_Approximation}}
\label{theorem_adaptive_approximation_proof}

In the following proof, we use the quadratic Lyapunov function $V \left( \mathbf{L} \right) = \Big\langle \mathbf{L}, \mathbf{L} \Big\rangle = \sum_{i,j=1}^N L_{ij}^2$. 

Select $T$ such that $T > \frac{\Delta_r}{1 - \rho}$, and define the sequence of stopping times as $t_k = kT$.  Denote $\boldsymbol{\Delta}(t) = \mathbf{A}(t) - \mathbf{D}(t)$, and we may write the expected drift of the Lyapunov function as
\begin{align}
&\mathbb{E} \left[ V \left( \mathbf{L}(t_{k+1}) \right) - V(\mathbf{L}(t_{k})) | \mathbf{L}(t_k) \right]   \nonumber \\
=& \sum_{t=t_k}^{t_{k+1}-1} \mathbb{E} \left[ \Big\langle \mathbf{L}(t+1), \mathbf{L}(t+1) \Big\rangle - \Big\langle \mathbf{L}(t), \mathbf{L}(t) \Big\rangle \Big| \mathbf{L}(t_k) \right]   \nonumber\\
=& \sum_{t=t_k}^{t_{k+1}-1} \mathbb{E} \left[ \Big\langle 2 \mathbf{L}(t), \boldsymbol{\Delta }(t) \Big\rangle + \Big\langle \boldsymbol{\Delta}(t) , \boldsymbol{\Delta }(t) \Big\rangle \Big| L(t_k) \right]
\label{drift}
\end{align}

By the bound on the arrival process in Assumption \ref{assumption_traffic}, we have $\Delta_{ij}(t) \leq A_{\max}$, and thus $\Big\langle \boldsymbol{\Delta}(t), \boldsymbol{\Delta}(t) \Big\rangle \leq N^2 A_{\max}^2$.

It now remains to determine the first term in (\ref{drift}). For any queue $Q_{ij}$, at any time $t$, the departure $D_{ij}(t)$ differs from $S_{ij}(t)$ only if $L_{ij}(t) = 0$ or it is within the reconfiguration delay. Let $\mathbf{R}$ denote the set of time slots within which the network is doing reconfiguration, then we may write
\begin{align}
\Big\langle \mathbf{L}(t), \mathbf{D}(t) \Big\rangle &= \left\{ 
\begin{array}{ll} 
\Big\langle \mathbf{L}(t), \mathbf{S}(t) \Big\rangle, & \mbox{if } t \notin \mathbf{R} \\
0, & \mbox{if } t \in \mathbf{R} \\
\end{array} 
\right. \nonumber \\
&= \Big\langle \mathbf{L}(t), \mathbf{S}(t) \Big\rangle - \Big\langle \mathbf{L}(t), \mathbf{S}(t) \Big\rangle \mathds{1}_{\{ t \in \mathbf{R} \}}
\label{reconfiguration_overhead}
\end{align}
and thus
\begin{align}
&\bbbe \left[ \Big\langle \mathbf{L}(t), \boldsymbol{\Delta} (t) \Big\rangle \Big| \mathbf{L}(t_k) \right]   
= \bbbe \left[ \Big\langle \mathbf{L}(t), \mathbf{A}(t) - \mathbf{D}(t)  \Big\rangle  \Big| \mathbf{L}(t_k) \right] \nonumber \\
= & \Big\langle \mathbf{L}(t), \boldsymbol{\lambda} \Big\rangle -  \bbbe \left[ \Big\langle \mathbf{L}(t), \mathbf{S}(t)  \Big\rangle  \Big| \mathbf{L}(t_k) \right] \nonumber \\
&+ \bbbe \left[ \Big\langle \mathbf{L}(t), \mathbf{S}(t)  \Big\rangle \mathds{1}_{ \{t \in \mathbf{R} \}} \Big| \mathbf{L}(t_k) \right]  \nonumber \\
=& \bbbe \left[ \Big\langle \mathbf{L}(t), \boldsymbol{\lambda} - \mathbf{\Pi}^*(t) \Big\rangle + \Big\langle \mathbf{L}(t), \mathbf{\Pi}^*(t) - \mathbf{S}(t)  \Big\rangle  \Big| \mathbf{L}(t_k) \right] \nonumber \\
&+ \bbbe \left[ \Big\langle \mathbf{L}(t), \mathbf{S}(t)  \Big\rangle \mathds{1}_{ \{t \in \mathbf{R} \}} \Big| \mathbf{L}(t_k) \right] 
\label{L_deltaL_product}
\end{align}
where $\mathbf{\Pi}^*(t)$ is the MaxWeight schedule at time $t$. 

In \cite{VFMW} the capacity region has been characterized as the convex hull of the feasible schedules, that is
\[
\mathcal{C} = \left\{ \sum_{\mathbf{S} \in \mathcal{S}} \alpha_{\mathbf{S}} \mathbf{S} : \sum_{\mathbf{S} \in \mathcal{S}} \alpha_{\mathbf{S}} < 1, \ \alpha_{\mathbf{S}} \geq 0, \ \forall \mathbf{S} \in \mathcal{S} \right\}. 
\]
Hence by the definition of the traffic load $\rho$, we may write 
$\boldsymbol{\lambda} = \rho \sum_{i=1}^I \alpha_i \mathbf{S}_i $, where $\mathbf{S}_i \in \mathcal{S}$ and $0 \leq \alpha_i < 1$ for $i = 1, \dots, I$. By the definition of the MaxWeight schedule, we have 
\begin{align}
\Big\langle \mathbf{L}(t), \boldsymbol{\lambda} - \mathbf{\Pi}^*(t) \Big\rangle &\leq \rho \sum_{i=1}^{I} \alpha_i W^*(t) - W^*(t) \nonumber \\
&\leq -(1 - \rho) W^*(t)
\label{Birkhoff}
\end{align}

Also by the definition of the $g$-adaptive policy and the fact that the scheduling policy satisfies Condition~\ref{strict_bound}, we have 
\begin{align}
\Big\langle \mathbf{L}(t), \mathbf{\Pi}^*(t) - \mathbf{S}(t) \Big\rangle 
=& [W^*(t) - W^{\pi}(t)] + [W^{\pi}(t) - W^S(t)] \nonumber \\
\leq& G + g(W^{\pi}(t)) \leq G + g(W^*(t))
\label{weight_difference}
\end{align}

Apply (\ref{Birkhoff}) and (\ref{weight_difference}) into (\ref{L_deltaL_product}), we obtain
\begin{align}
\bbbe \left[ \Big\langle \mathbf{L}(t), \boldsymbol{\Delta} (t) \Big\rangle \Big| \mathbf{L}(t_k) \right]   
\leq& \bbbe \left[ -(1 - \rho) W^*(t) + g(W^*(t))  \right. \nonumber \\
&+ \left. G +  W^*(t) \mathds{1}_{ \{t \in \mathbf{R} \}} \Big| \mathbf{L}(t_k) \right] 
\end{align}

In order to achieve the stability, it is necessary to ensure that the reconfiguration does not happen too often. We make this statement formal with the following lemma:

\begin{lemma}
\label{lemma_frequency}
Given any fixed $T' > 0$ and a Markov policy $\pi$ satisfying Condition \ref{strict_bound} with weight function $f(x) = x$. Let $g(\cdot)$ be a sublinear and strictly increasing function, and let $M = g^{-1}(G + N(A_{max}+1) T') + NT'$. Suppose a reconfiguration occurs at time $t$ and $W^*(t) > M$, then no reconfiguration could occur in $[t+1, t+T']$.
\end{lemma}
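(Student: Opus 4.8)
The plan is to prove Lemma~\ref{lemma_frequency} by \emph{induction} on $\tau$ over $\{t+1,\dots,t+T'\}$, showing the adaptive rule never triggers at any such $\tau$. The starting point is that a reconfiguration at time $t$ means the policy adopts $\mathbf{\Pi}(t)$, so the schedule now in force has weight $W_{\mathbf{S}(t)}(t) = W^{\pi}(t) \geq W^*(t) - G > M - G$ by Condition~\ref{strict_bound}. Fix $\tau$ with $t < \tau \leq t+T'$ and assume no reconfiguration has occurred in $[t+1,\tau-1]$; then $\mathbf{S}$ is unchanged on $[t,\tau-1]$ (the schedule is constant between consecutive reconfiguration instances), so $\tilde{W}(\tau) = W_{\mathbf{S}(t)}(\tau)$, and the rule reconfigures at $\tau$ iff $\Delta W(\tau) = W^{\pi}(\tau) - W_{\mathbf{S}(t)}(\tau) > g(W^{\pi}(\tau))$. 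It suffices to rule this out; the base case $\tau = t+1$ is the empty-interval case, and during the reconfiguration window $[t,t+\Delta_r]$ no departures occur, so $W_{\mathbf{S}(t)}$ can only grow there and the estimates below are only easier.

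First I would upper-bound the accumulated weight gap. Since $\mathbf{\Pi}(\tau)\in\mathcal{S}$ we have $W^{\pi}(\tau) \leq W^*(\tau)$, hence $\Delta W(\tau) \leq W^*(\tau) - W_{\mathbf{S}(t)}(\tau)$. As at most $A_{\max}$ packets enter each of the at most $N$ queues a schedule can serve, $W^*(\tau) \leq \langle \mathbf{L}(t),\mathbf{\Pi}^*(\tau)\rangle + N A_{\max}(\tau-t) \leq W^*(t) + N A_{\max}(\tau-t)$; and as at most one packet leaves each served queue per slot, $W_{\mathbf{S}(t)}(\tau) \geq W_{\mathbf{S}(t)}(t) - N(\tau-t) = W^{\pi}(t) - N(\tau-t) \geq W^*(t) - G - N(\tau-t)$. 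Subtracting, $\Delta W(\tau) \leq G + N(A_{\max}+1)(\tau-t) \leq G + N(A_{\max}+1)T'$.

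Next I would obtain a matching lower bound on $W^{\pi}(\tau)$. By Condition~\ref{strict_bound}, $W^{\pi}(\tau) \geq W^*(\tau) - G$; and since $\mathbf{\Pi}^*(t)$ is still feasible at $\tau$ and loses at most one packet per served queue per slot, $W^*(\tau) \geq \langle \mathbf{L}(\tau),\mathbf{\Pi}^*(t)\rangle \geq W^*(t) - N(\tau-t) > M - NT'$. Hence $W^{\pi}(\tau) > M - NT' - G$, which by the choice of $M$ — and since $g$ is strictly increasing, so $g^{-1}$ is defined and increasing on the relevant range — is at least $g^{-1}\big(G + N(A_{\max}+1)T'\big)$. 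Applying $g$ gives $g(W^{\pi}(\tau)) \geq G + N(A_{\max}+1)T' \geq \Delta W(\tau)$, so no reconfiguration occurs at $\tau$. The induction then yields no reconfiguration anywhere in $[t+1,t+T']$, as claimed.

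The main obstacle is the additive bookkeeping in the last step. Each use of Condition~\ref{strict_bound} (once for the in-force schedule, once for the freshly proposed $\mathbf{\Pi}(\tau)$) costs a $G$, the drain over $\tau-t$ slots costs $N(\tau-t)$ on one side, the arrival inflation costs $N A_{\max}(\tau-t)$ on the other, and all of these must be absorbed by the slack in $M$; taking the tightest route — comparing $W^*(\tau)$ directly with $W^*(t)$ rather than routing through $W^{\pi}(t)$ — is what keeps the constant in line, and one should be prepared to inflate the additive constant in $M$ by $O(G)$ if the cleanest chain of inequalities does not quite close. Everything else is routine, and sublinearity of $g$ is not actually needed here — only its strict monotonicity and invertibility.
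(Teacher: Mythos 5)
Your argument follows essentially the same route as the paper's proof: the same three estimates (at most $N$ departures per slot for the schedule in force, at most $NA_{\max}$ arrivals per slot inflating the maximum weight, and Condition~\ref{strict_bound} to relate $W^{\pi}$ to $W^*$) combine to give $\Delta W(\tau) \leq G + N(A_{\max}+1)(\tau-t)$ together with the lower bound $W^*(\tau) > M - NT'$, and monotonicity of $g$ closes the loop. Making the induction on $\tau$ explicit (to justify that the in-force schedule at $\tau$ is still $\mathbf{\Pi}(t)$) is a presentational improvement over the paper, which leaves that step implicit. The one substantive divergence is where the threshold is evaluated, and it is exactly where your chain does not close: you compare $\Delta W(\tau)$ against the policy's actual threshold $g(W^{\pi}(\tau))$ and lower-bound $W^{\pi}(\tau) \geq W^*(\tau) - G > M - NT' - G$, but the claim that this last quantity is at least $g^{-1}\bigl(G+N(A_{\max}+1)T'\bigr) = M - NT'$ is false by exactly the additive $G$ you flag at the end; with the stated $M$ your argument only works for $G=0$ (e.g.\ Adaptive MaxWeight). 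The paper instead shows $\Delta W(\tau) \leq g(W^*(\tau))$, which matches the stated $M$ but bounds the difference by $g$ evaluated at $W^*(\tau) \geq W^{\pi}(\tau)$, i.e.\ by an \emph{upper} bound on the true threshold, so it carries the same $G$-sized slack for general $\pi$ without acknowledging it. Either way the repair is cosmetic: enlarging $M$ to $g^{-1}\bigl(G+N(A_{\max}+1)T'\bigr) + NT' + G$ makes your chain airtight and only shifts additive constants downstream in Theorem~\ref{Adaptive_Approximation}, not the stability conclusion. You are also right that sublinearity of $g$ is not used in this lemma, only strict monotonicity.
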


The proof for lemma \ref{lemma_frequency} is given in appendix \ref{lemma_frequency_proof}. Note that lemma \ref{lemma_frequency} gives an upper bound on the frequency of schedule reconfiguration when the queue length is large, since the time between two schedule reconfigurations is larger than $T$ if $W^*(t) > M$. The value of $T'$ in lemma \ref{lemma_frequency} could be arbitrary in general, while in the following we set $T' = T$ (where $T > \frac{\Delta_r}{1-\rho}$ as selected earlier). Thus if $\forall t \in [t_k - \Delta_r, t_{k+1}]$: $W^*(t) > M = g^{-1}(G + N(A_{max}+1) T) + NT$, then at most one reconfiguration could occur and we have: 
\begin{align}
\sum_{t=t_k}^{t_{k+1}-1} \mathds{1}_{\{t\in\mathbf{R}\}} \leq \Delta_r
\end{align}

We thus have  $\forall \mathbf{L}(t_k) : W^*(t_k) > M$,
\begin{align}
&\bbbe \left[ V(\mathbf{L}(t_{k+1})) - V(\mathbf{L}(t_k)) \Big| \mathbf{L}(t_k) \right] \nonumber \\
\leq& \sum_{t=t_k}^{t_{k+1}-1} \bbbe \left[ -2(1 - \rho) W^*(t) + 2g \Big( W^*(t) \Big) + 2G \Big| \mathbf{L}(t_k) \right] \nonumber \\
&+ \sum_{t=t_k}^{t_{k+1}-1} \bbbe \left[  2W^*(t) \mathds{1}_{ \{t \in \mathbf{R} \}} \Big| \mathbf{L}(t_k) \right] + TN^2A_{\max}^2  \nonumber \\
\leq&  -2T(1 - \rho)  \left( W^*(t_k) - NT \right) + 2Tg\left( W^*(t_k) + NA_{\max}T \right)   \nonumber \\
&+ 2TG  +  2 \Delta_r \left( W^*(t_k) + NA_{\max}T \right) + TN^2A_{\max}^2  \nonumber \\
\leq& - \frac{2T}{N}(1 - \rho - \frac{\Delta_r}{T}) ||\mathbf{L}(t_k)|| + 2Tg\Big(||\mathbf{L}(t_k)|| + NA_{\max}T \Big) \nonumber \\
&+ 2T \Big(G + (1-\rho)NT + NA_{\max} \Delta_r \Big) + T N^2 A_{\max}^2
\label{drift1}
\end{align}
where the last inequality follows from  $||\mathbf{L}(t)|| \geq W^*(t) \geq \frac{1}{N}||\mathbf{L}(t)||$. Now since $g(\cdot)$ is a sublinear function, there exist constants $B, K < \infty$ and $\epsilon > 0$ such that for $||\mathbf{L}(t)|| > B$:
\[
\bbbe \left[ V(\mathbf{L}(t_{k+1})) - V(\mathbf{L}(t_k)) \Big| \mathbf{L}(t_k) \right] \leq - \epsilon ||\mathbf{L}(t_k)|| + K
\]
which by Fact \ref{foster}, implies $\lim\limits_{k \rightarrow \infty} \bbbe \left\{ ||\mathbf{L}(t_k)|| \right\} \leq K / \epsilon$ and thus the strong stability. Since it holds for any admissible traffic, the $g$-adaptive variant of $\pi$ achieves throughput optimality. 

Note that (\ref{drift1}) gives a bound on the drift for  $||\mathbf{L}(t_k)|| > NM$. On the other hand, for $||\mathbf{L}(t_k)|| \leq NM$, we also have a simple upper bound on the expected drift as
\begin{align}
&\bbbe \left[ V(\mathbf{L}(t_{k+1})) - V(\mathbf{L}(t_k)) \Big| \mathbf{L}(t_k) \right] \nonumber \\
\leq& \sum_{t = t_k}^{t_{k+1}-1} \bbbe \left[ 2\Big\langle \mathbf{L}(t), \boldsymbol{\Delta}(t) \Big\rangle + N^2 A_{\max}^2 \Big| \mathbf{L}(t) \right]  \nonumber \\
\leq & 2T||\mathbf{L}(t_k)|| A_{\max} + TN^2A_{\max}^2 \nonumber \\
\leq & 2TNM A_{\max} + TN^2A_{\max}^2 \nonumber \\
\leq & -\frac{2T}{N}(1-\rho-\frac{\Delta_r}{T})||\mathbf{L}(t_k)|| + 2T(1-\rho-\frac{\Delta_r}{T})M  \nonumber \\
&+ 2TNMA_{\max} + TN^2A_{\max}^2
\label{drift_inside}
\end{align}

Combining (\ref{drift1}) and (\ref{drift_inside}) we get a (rather loose) bound on the drift for all $\mathbf{L}(t_k)$ as
\begin{align*}
&\bbbe \left[ V(\mathbf{L}(t_{k+1})) - V(\mathbf{L}(t_k)) \Big| \mathbf{L}(t_k) \right] \nonumber \\
\leq& - \frac{2T}{N}(1 - \rho - \frac{\Delta_r}{T}) ||\mathbf{L}(t_k)|| + 
2T g\Big(||\mathbf{L}(t_k)|| + NA_{\max}T \Big) \nonumber \\
&+ 2T \Big(G + N (T + A_{\max} \Delta_r) + M(1 + NA_{\max}) + \frac{N^2 A_{\max}^2}{2} \Big) 
\end{align*}

From Fact \ref{foster} we obtain the following bound:
\begin{align*}
&\lim_{k \rightarrow \infty} \bbbe \Big[ ||\mathbf{L}(t_k)|| \Big]  \nonumber \\
\leq&  \frac{N}{1-\rho-\frac{\Delta_r}{T}}  \Big\{ \lim_{k \rightarrow \infty} \bbbe\left[ g\Big(||\mathbf{L}(t_k)|| + NA_{\max}T \Big) \right]  \nonumber \\
&+ \Big(G + N (T + A_{\max} \Delta_r) + M(1 + NA_{\max}) + \frac{N^2 A_{\max}^2}{2} \Big) \Big\} \\
\leq&  \frac{N}{1-\rho-\frac{\Delta_r}{T}}  \Big\{ g\Big( \lim_{k \rightarrow \infty} \bbbe\Big[ ||\mathbf{L}(t_k)|| \Big] + NA_{\max}T  \Big) \nonumber \\
&+ \Big(G + N (T + A_{\max} \Delta_r) + M(1 + NA_{\max}) + \frac{N^2 A_{\max}^2}{2} \Big) \Big\}
\end{align*}
where the last inequality follows from Jensen's inequality given the assumption that $g(\cdot)$ is a concave and continuous function. We then have the bound 
\begin{align}
\lim_{K \rightarrow \infty} \bbbe \left[ ||\mathbf{L}(t)|| \right] 
\leq \tilde{L}_T, \ \ \mbox{for each $T>\frac{\Delta_r}{1-\rho}$}
\end{align}
where $\tilde{L}_T$ satisfies $\tilde{L}_T = \frac{N}{1-\frac{\Delta_r}{T}-\rho} \Big\{  g(\tilde{L}_T + NA_{\max}T) + \Big(G + N (T + A_{\max} \Delta_r) + M(1 + NA_{\max}) + \frac{N^2 A_{\max}^2}{2} \Big) \Big\}$.

\subsection{Proof of Lemma \ref{lemma_frequency}}
\label{lemma_frequency_proof}

\begin{proof}
By the assumption, the schedule is reconfigured to $\mathbf{\Pi}(t)$ at time $t$, and according to Condition \ref{strict_bound} we have $W^{\pi}(t) = \Big\langle \mathbf{L}(t), \mathbf{\Pi}(t) \Big\rangle \geq W^*(t) -G$. We need to show that at any time $\tau \in [t+1, t+T]$, the weight of $\mathbf{\Pi}(t)$ is large enough so that no schedule reconfiguration occurs.

Since at most one packet could depart at each queue in a time slot, we have
\begin{align}
W(\tau) = \Big\langle \mathbf{L}(\tau), \mathbf{\Pi}(t) \Big\rangle &\geq W^{\pi}(t) - N(\tau - t) \nonumber \\
&\geq W^*(t) - G - NT 
\label{lemma1_eq1}
\end{align}

On the other hand, since the arrival at each queue is bounded by $A_{\max}$, we have an upper bound for the maximum weight:
\begin{align}
W^*(\tau) = \Big\langle \mathbf{L}(\tau), \mathbf{\Pi}^*(\tau) \Big\rangle &\leq W^*(t) + NA_{\max}(\tau - t) \nonumber \\
&\leq W^*(t) + NA_{\max}T
\label{lemma1_eq2}
\end{align}

From (\ref{lemma1_eq1}) and (\ref{lemma1_eq2}) we have: 
\begin{align*}
W^*(\tau)-W(\tau) \leq G + N(A_{\max} + 1)T
\end{align*}

Now since the maximum weight is at least the weight of the schedule $\mathbf{\Pi}^*(t)$, then from the bound on packet departure as in (\ref{lemma1_eq1}), we have a lower bound for the maximum weight:
\begin{align*}
W^*(\tau) \geq \Big\langle \mathbf{L}(\tau), \mathbf{\Pi}^*(t) \Big\rangle &\geq W^*(t) - N(\tau-t) \nonumber \\
&\geq W^*(t) - NT  
\end{align*}
Hence if $W^*(t) > M$, then $\forall \tau \in [t+1, t+T]$:
\begin{align*}
g\Big(W^*(\tau) \Big) > g(M - NT) &\geq G + N(A_{\max}+1)T \\
&\geq W^*(\tau) - W(\tau) 
\end{align*}
Since the weight difference does not exceed the threshold, no schedule reconfiguration could occur within $[t+1, t+T]$.
\end{proof}

\subsection{Proof of Theorem \ref{Adaptive_Expected}}

\begin{proof}
Notice that the policy $\pi$ (and thus its $g$-adaptive variant $\pi^g$) utilizes randomness in generating schedules, we take this into account when performing the drift analysis. We have the expected drift given by
\begin{align}
&\bbbe^{\pi^g} \left[ V(\mathbf{L}(t_{k+1})) - V(\mathbf{L}(t_k)) \Big| \mathbf{L}(t_k) \right] \nonumber \\
\leq& \sum_{t=t_k}^{t_{k+1}-1} \bbbe^{\pi^g} \left[ 2 \Big\langle \mathbf{L}(t), \mathbf{\Delta}(t) \Big\rangle  + N^2 A_{\max}^2 \Big| \mathbf{L}(t_k) \right]
\label{theorem2_derivation1}
\end{align}
where the expectation is taken with respect to the policy $\pi^g$.

Note that (\ref{reconfiguration_overhead}) in the proof of Theorem \ref{Adaptive_Approximation} remains the same in this case, we thus have
\begin{align}
& \bbbe^{\pi^g} \left[ \Big\langle \mathbf{L}(t), \boldsymbol{\Delta}(t) \Big\rangle \Big| \mathbf{L}(t_k)\right] \nonumber \\
= &  \bbbe^{\pi^g} \left[ \Big\langle \mathbf{L}(t), \boldsymbol{\lambda} - \mathbf{\Pi}^*(t) \Big\rangle + \Big\langle \mathbf{L}(t), \mathbf{\Pi}^*(t) - \mathbf{S}(t) \Big\rangle  \Big| \mathbf{L}(t_k) \right]  \nonumber \\
&+ \bbbe^{\pi^g} \left[ \Big\langle \mathbf{L}(t), \mathbf{S}(t) \Big\rangle \mathds{1}_{ \{t \in \mathbf{R} \}}  | \mathbf{L}(t_k)\right] \nonumber \\
\leq&  \bbbe^{\pi^g} \left[ - (1-\rho) W^*(t) + \Big\langle \mathbf{L}(t), \mathbf{\Pi}^*(t) - \mathbf{S}(t) \Big\rangle  \Big| \mathbf{L}(t_k) \right]  \nonumber \\
&+ \bbbe^{\pi^g} \left[ W^*(t)  \mathds{1}_{ \{t \in \mathbf{R} \} }  | \mathbf{L}(t_k)\right] 
\label{theorem2_derivation2}
\end{align}

By the construction of the $g$-adaptive variant and the fact that the policy $\pi$ satisfies Condition~\ref{expected_bound}, we have for any $t \geq t_k$:
\begin{align}
&\bbbe^{\pi^g} \left[  \Big\langle \mathbf{L}(t), \mathbf{\Pi}^*(t) - \mathbf{S}(t) \Big\rangle \Big| \mathbf{L}(t_k) \right] \nonumber \\
=& \bbbe^{\pi^g} \left[ \bbbe^{\pi} \left[ W^*(t) - W^{\pi}(t) \Big| \mathbf{L}(t) \right] \Big| \mathbf{L}(t_k) \right] \nonumber \\
&+ \bbbe^{\pi^g} \left[ \bbbe^{\pi} \left[ W^{\pi}(t) - W^{\pi^g}(t) \Big| \mathbf{L}(t) \right] \Big| \mathbf{L}(t_k) \right] \nonumber \\
\leq& G + \bbbe^{\pi^g} \left[ g(W^*(t)) \Big| \mathbf{L}(t_k) \right]
\label{theorem2_derivation4}
\end{align}

Similar to the proof in Theorem \ref{Adaptive_Approximation}, we need a bound on the rate of schedule reconfiguration. The following lemma works similarly as Lemma \ref{lemma_frequency} except it restricts the rate of reconfiguration in the average sense.
\begin{lemma}
\label{lemma_frequency_random}
Given any fixed $T' > 0$ and a scheduling policy $\pi$ satisfying Condition \ref{strict_bound} under the weight function $f(x) = x$. Let $g(x)$ be a sublinear and strictly increasing function, and let $h(x) = g(x - NT) - NA_{\max}T$. Then 
\begin{align*}
\bbbe^{\pi^g} \left[ \sum_{t = t_k}^{t_{k+1}-1} \mathds{1}_{\{t \in \mathbf{R} \}} \Big| \mathbf{L}(t_k) \right] \leq \Delta_r + \frac{TG}{h(W^*(t_k))}
\end{align*}
\end{lemma}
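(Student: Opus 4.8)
The plan is to follow the proof of Lemma~\ref{lemma_frequency}, but since under Condition~\ref{expected_bound} the schedule $\mathbf\Pi(t)$ generated by $\pi$ is random and only its conditional mean is close to the MaxWeight schedule, the deterministic conclusion ``no reconfiguration for $T'$ more slots'' must be relaxed to a bound, in expectation, on the number of reconfigurations inside $[t_k,t_{k+1})$. (When $\pi$ is deterministic, i.e.\ when it satisfies Condition~\ref{strict_bound}, the argument below collapses to Lemma~\ref{lemma_frequency}.)

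First I would reduce the statement to a count. Each reconfiguration freezes all departures for at most $\Delta_r$ consecutive slots, so $\sum_{t=t_k}^{t_{k+1}-1}\mathds 1_{\{t\in\mathbf R\}}\le \Delta_r + \Delta_r\,R_k$, where $R_k$ is the number of reconfiguration epochs lying strictly in $(t_k,t_{k+1})$ and the leading $\Delta_r$ accounts for a reconfiguration possibly straddling $t_k$. It therefore suffices to show $\bbbe^{\pi^g}[\,R_k\mid\mathbf L(t_k)\,]\le TG/(\Delta_r\,h(W^*(t_k)))$ after re-collecting constants.

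Next I would import the deterministic weight-tracking inequalities used in Lemma~\ref{lemma_frequency}: while a fixed schedule $\mathbf S$ is held, $\langle\mathbf L(\tau),\mathbf S\rangle$ drops by at most $N$ per slot (and not at all during a reconfiguration), while $W^*(\cdot)$ grows by at most $NA_{\max}$ per slot and drops by at most $N$ per slot. Let $\tau_0\le t_k<\tau_1<\dots<\tau_{R_k}<t_{k+1}$ be the reconfiguration epochs affecting the interval, so the schedule used on $[\tau_{i-1},\tau_i)$ is $\mathbf\Pi(\tau_{i-1})$ with initial weight $W^\pi(\tau_{i-1})$. The reconfiguration rule at $\tau_i$, namely $W^\pi(\tau_i)-\langle\mathbf L(\tau_i),\mathbf\Pi(\tau_{i-1})\rangle>g(W^\pi(\tau_i))$, together with $\langle\mathbf L(\tau_i),\mathbf\Pi(\tau_{i-1})\rangle\ge W^\pi(\tau_{i-1})-N(\tau_i-\tau_{i-1})$ and $W^\pi(\tau_i)\le W^*(\tau_i)\le W^*(\tau_{i-1})+NA_{\max}(\tau_i-\tau_{i-1})$, gives a per-epoch inequality in which the Condition~\ref{expected_bound} slacks $Y_i:=W^*(\tau_i)-W^\pi(\tau_i)\ge 0$ appear telescopically; summing over $i$ and using $\sum_i(\tau_i-\tau_{i-1})\le T$ leaves $\sum_{i=1}^{R_k} g(W^\pi(\tau_i)) < Y_0 + N(A_{\max}+1)T$ (equivalently with $Y_0$ replaced by the max-weight-minus-used-schedule gap at $t_k$, which Condition~\ref{expected_bound} controls in conditional expectation). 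Then, at each epoch the sure bounds $W^*(\tau_i)\ge W^*(t_k)-NT$ and $W^\pi(\tau_i)>\langle\mathbf L(\tau_i),\mathbf\Pi(\tau_{i-1})\rangle\ge W^\pi(\tau_{i-1})-NT$, combined with monotonicity of $g$ and the definition $h(x)=g(x-NT)-NA_{\max}T$, yield $g(W^\pi(\tau_i))\ge h(W^*(t_k))$ whenever $W^*(t_k)$ is large enough that $h(W^*(t_k))>0$; hence $R_k\,h(W^*(t_k)) < Y_0 + N(A_{\max}+1)T$, and taking $\bbbe^{\pi^g}[\cdot\mid\mathbf L(t_k)]$ and bounding $\bbbe[Y_0]\le G$ via Condition~\ref{expected_bound} and the tower property gives the claim.

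The main obstacle is precisely this last expectation step: the epochs $\tau_i$ are random stopping times correlated with the realized weights $W^\pi(\tau_i)$, so one cannot pull $\bbbe$ through $g(\cdot)$ or invoke Condition~\ref{expected_bound} ``epoch by epoch'' without justification. The clean way around it is to observe that on the very event ``$\pi^g$ reconfigures at $t$'' one already has $W^\pi(t)>\langle\mathbf L(t),\mathbf S(t-1)\rangle$, and, tracking the used schedule back to the last reconfiguration (using Lemma~\ref{lemma_frequency} to bound how far back that is when weights are large), $\langle\mathbf L(t),\mathbf S(t-1)\rangle\ge W^*(t_k)-NT$ up to a Condition~\ref{expected_bound} slack, so $g(W^\pi(t))\ge h(W^*(t_k))$ holds slotwise on the reconfiguration event itself and the remaining expectation over $\mathbf L(t_k)$ is routine; alternatively one applies Wald's identity / optional stopping to the supermartingale implicit in the telescoped inequality. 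The counting reduction, the deterministic weight-drop inequalities, and tracking constants to land exactly on $\Delta_r+TG/h(W^*(t_k))$ are then mechanical.
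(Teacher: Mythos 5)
There is a genuine gap, and it sits exactly where you place your ``main obstacle,'' but it is worse than an expectation-interchange issue: the deterministic step feeding into that expectation is false. Your telescoped inequality $R_k\,h(W^*(t_k)) < Y_0 + N(A_{\max}+1)T$ requires the ``sure bound'' $g(W^{\pi}(\tau_i)) \geq h(W^*(t_k))$ at \emph{every} reconfiguration epoch $\tau_i$. But under Condition~\ref{expected_bound} the proposed schedule $\mathbf{\Pi}(\tau_i)$ is only close to the MaxWeight schedule in conditional expectation; on a bad realization (e.g.\ for the Tassiulas random policy, a run in which the random candidate schedules are all poor) every $W^{\pi}(\tau_i)$ can be far below $W^*(t_k)$, your recursion $W^{\pi}(\tau_i) > W^{\pi}(\tau_{i-1}) - NT$ bottoms out at a quantity with no deterministic relation to $W^*(t_k)$, and the telescoped sum then gives no control on $R_k$ at all. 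Patching this by invoking Condition~\ref{expected_bound} ``epoch by epoch'' at the random times $\tau_i$ is precisely the unjustified step you flag, and neither Wald nor optional stopping rescues it, because the number of epochs $R_k$ is itself determined by the same bad realizations. A secondary issue: even granting the telescoping, your reduction $\sum_t \mathds{1}_{\{t\in\mathbf{R}\}} \leq \Delta_r + \Delta_r R_k$ combined with $\bbbe[R_k]\leq (G+N(A_{\max}+1)T)/h(W^*(t_k))$ yields $\Delta_r + \Delta_r\bigl(G+N(A_{\max}+1)T\bigr)/h(W^*(t_k))$, not the stated $\Delta_r + TG/h(W^*(t_k))$; the constants do not ``land exactly'' as claimed (though either form would suffice for Theorem~\ref{Adaptive_Expected}).

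The paper's proof avoids all of this with a different decomposition that only ever needs to control the \emph{first} reconfiguration in the interval. Let $\tau=t_k'$ be the first reconfiguration instance in $[t_k,t_{k+1})$. The Lemma~\ref{lemma_frequency}-style deterministic tracking argument shows: \emph{if} the realized gap $W^*(\tau)-W_{\Pi^g(\tau)}(\tau)$ is below $h(W^*(t_k))$, then no further reconfiguration can occur in $[\tau,t_{k+1}]$, so $Z_{[t_k,t_{k+1}]}\leq 1$ and at most $\Delta_r$ slots are lost. The probability that this gap is \emph{not} small is bounded by Markov's inequality applied to Condition~\ref{expected_bound}, giving $G/h(W^*(t_k))$; on that bad event one trivially bounds the lost slots by $T$. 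Summing the two cases gives exactly $\Delta_r + TG/h(W^*(t_k))$. The essential idea you are missing is this single-epoch conditioning plus Markov's inequality: randomness only needs to be paid for once, at the first reconfiguration, because a good first reconfiguration deterministically forecloses all later ones within the frame.
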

The proof of Lemma \ref{lemma_frequency_random} is given in Appendix \ref{lemma_frequency_random_proof} and follows the similar approach as in Lemma \ref{lemma_frequency}. 

Since $W^*(t_k) - NT \leq W^*(t) \leq W^*(t_k) + NA_{\max}T$ for all $t \in [t_k, t_{k+1}]$, applying Lemma \ref{lemma_frequency_random} we obtain

\begin{align*}
&\bbbe^{\pi^g} \left[ V(\mathbf{L}(t_{k+1})) - V(\mathbf{L}(t_k)) \Big| \mathbf{L}(t_k) \right] \nonumber \\
\leq& \sum_{t = t_k}^{t_{k+1}-1} \bbbe^{\pi^g} \left[ 2\Big\langle \mathbf{L}(t), \boldsymbol{\Delta}(t) \Big\rangle + N^2 A_{\max}^2 \Big| \mathbf{L}(t_k) \right] \nonumber \\
\leq& \sum_{t = t_k}^{t_{k+1}-1} \bbbe^{\pi^g} \left[ -2(1-\rho) W^*(t) + 2g(W^*(t)) + 2G \Big| \mathbf{L}(t_k)\right] \nonumber \\
&+ \sum_{t = t_k}^{t_{k+1}-1} \bbbe^{\pi^g} \left[ 2W^*(t) \mathds{1}_{\{ t \in \mathbf{R} \}} + N^2 A_{\max}^2 \Big| \mathbf{L}(t_k)\right] \nonumber \\
\leq& -2T \left( 1-\rho-\frac{\Delta_r}{T}-\frac{G}{h(W^*(t_k))} \right) W^*(t_k) \nonumber \\
&+ 2Tg\Big(W^*(t_k) + NA_{\max}T \Big) + TN^2 A_{\max}^2 \nonumber \\
&+ 2T \Big(G + N (T + A_{\max} \Delta_r) + \frac{NA_{\max}GT}{h(W^*(t_k))} + \frac{N^2A_{\max}^2}{2} \Big) 
\end{align*}

Since $T > \frac{\Delta_r}{1-\rho}$, we have $1 - \rho - \frac{\Delta_r}{T} > 0$, and we may select an arbitrary constant $\epsilon \in (0, 1 - \rho - \frac{\Delta_r}{T})$. Now since $\lim\limits_{x \rightarrow \infty} h(x) = \infty$, there exists a constant $M_1 < \infty$ such that $W^*(t_k) > M_1$ implies $1 - \rho - \frac{\Delta_r}{T} - \frac{G}{h(W^*(t_k))} > \epsilon$. Also, by the sublinearity of $g(\cdot)$, there exists a constant $M_2$ such that $W^*(t_k) > M_2$ implies $g(W^*(t_k)) < \frac{\epsilon}{3}W^*(t_k)$. Now let $M_3 = \frac{3}{\epsilon} \left( G + N (T + A_{\max} \Delta_r) + \frac{NA_{\max}GT}{h(W^*(t_k))} + \frac{N^2A_{\max}^2}{2} \right)$ and $B = \max\{M_1, M_2, M_3\}$, we have that if $W^*(t_k) > B$, then
\begin{align*}
&\bbbe^{\pi^g} \left[ V(\mathbf{L}(t_{k+1})) - V(\mathbf{L}(t_k)) \Big| \mathbf{L}(t_k) \right]  \\
<& - \epsilon W^*(t_k) + \frac{\epsilon}{3} W^*(t_k) + \frac{\epsilon}{3} W^*(t_k) = - \frac{\epsilon}{3} W^*(t_k)
\end{align*}

Recall that $ W^*(t) \geq \frac{1}{N} ||\mathbf{L}(t)||$ at any time $t$, we have that for $||\mathbf{L}(t_k)|| > NB:$
\begin{align}
\bbbe^{\pi^g} \left[ V(\mathbf{L}(t_{k+1})) - V(\mathbf{L}(t_k)) \Big| \mathbf{L}(t_k) \right] < - \frac{\epsilon}{3N} ||\mathbf{L}(t_k)||
\end{align}
then from Fact \ref{foster} we have that the network is strongly stable under any feasible traffic load, hence the $g$-adaptive variant of $\pi$ achieves throughput optimality. 

\end{proof}

\subsection{Proof of Lemma \ref{lemma_frequency_random}}
\label{lemma_frequency_random_proof}

\begin{proof}
Let $Z_{[s, t]}$ be the number of schedule reconfigurations within the time period $[s, t]$. Recall in lemma \ref{lemma_frequency} we showed that $Z_{[t_k, t_{k+1}]} \leq 1$ given $W^*(t_k)$ is large. Here we show a similar result in the probabilistic sense, in particular, we would derive a bound on $\Pr\{ Z_{[t_k, t_{k+1}]} \leq 1 \}$ in the following. 

Let $t'_k = \min\limits_{k: t_k^S \geq t} t_k^S$ denote the first reconfiguration instance after time $t_k$.
We may then write the event that exactly one schedule reconfiguration occurs in the interval $[t_k, t_{k+1}]$ as:
\begin{align*}
\{ Z_{[t_k, t_{k+1}]} = 1 \}  
=&  \bigcup_{\tau = t_k}^{t_{k+1}-1} \left\{ \{ \tau = t'_k \} \ \cap \ \{ Z_{[\tau, t_{k+1}]} = 0 \} \right\}. 
\end{align*}
Let $E_{\tau} = \left\{ \{ \tau = t'_k \} \ \cap \ \{ Z_{[\tau, t_{k+1}]} = 0 \} \right\}$. Note $E_{\tau}$ is the event that there is exactly one schedule reconfiguration within $[t_k, t_{k+1}]$ and it occurs at time $\tau$. The events $E_{\tau}$ are disjoint events for different $\tau$, hence we have that
\begin{align}
&\Pr \left\{ Z_{[t_k, t_{k+1}]} = 1 \Big| \mathbf{L}(t) \right\}  
=  \sum_{\tau = t_k}^{t_{k+1}-1} \Pr\left\{ E_{\tau} \Big| \mathbf{L}(t)\} \right\}  \nonumber \\
=& 
\begin{aligned}[t]
\sum_{\tau = t_k}^{t_{k+1}-1} \Pr & \left\{ Z_{[\tau, t_{k+1}]} = 0  \Big| \mathbf{L}(t), \tau = t'_k  \right\} \Pr \left\{ \tau = t'_k \Big| \mathbf{L}(t) \right\}
\end{aligned} \nonumber \\
\label{prob_one_reconfiguration}
\end{align}

Following the similar approach in Lemma \ref{lemma_frequency}, we show that if at the schedule reconfiguration instance $\tau$, the weight difference is small, then no reconfiguration could occur in the interval $[\tau, t_{k+1}]$. Specifically, let $h(W^*(t_k)) = g(W^*(t_k) - NT) - N(A_{\max}+1)T$, then if $W^*(\tau) - W_{\Pi^g (\tau)}(\tau) < h(W^*(t_k))$, we have for any $\tau' \in [\tau, t_{k+1}]$: 
\begin{align*}
&W^*(\tau') - W_{\Pi^g (t)}(\tau') \\
\leq& W^*(\tau) - W_{\Pi^g (t)}(\tau) + N (A_{\max}+1) (\tau' - \tau)\\
<& g(W^*(t_k) - NT) \stackrel{(a)}{\leq} g(W^*(\tau'))
\end{align*}
where $(a)$ follows from $W^*(\tau') \geq W^*(t) - N(\tau' - t) \geq W^*(t) - NT$ and $g$ strictly increasing. We thus have:
\begin{align*}
\Pr & \left\{ Z_{[\tau, t_{k+1}]} = 0 \Big| W^*(\tau) - W_{\Pi^g (\tau)}(\tau) < h(W^*(t_k)) \right\} = 1.
\end{align*}
We then obtain:
\begin{align}
&\Pr \left\{ Z_{[\tau, t_{k+1}]} = 0 \Big| \tau = t'_k \right\}  \nonumber \\
=& 
\begin{aligned}[t]
\Pr & \left\{ Z_{[\tau, t_{k+1}]} = 0 \Big| W^*(\tau) - W_{\Pi^g (\tau)}(\tau) \geq h(W^*(t)) \right\} \cdot \\
&\Pr \left\{ W^*(\tau) - W_{\Pi^g (\tau)}(\tau) \geq h(W^*(t)) \Big| \tau = t'_k \right\} 
\end{aligned}  \nonumber \\
&+ 
\begin{aligned}[t]
\Pr & \left\{ Z_{[\tau, t_{k+1}]} = 0 \Big| W^*(\tau) - W_{\Pi^g (\tau)}(\tau) < h(W^*(t)) \right\} \cdot \\
&\Pr \left\{ W^*(\tau) - W_{\Pi^g (\tau)}(\tau) < h(W^*(t)) \Big| \tau = t'_k \right\} \\
\end{aligned}  \nonumber \\
\geq& \Pr \left\{ W^*(\tau) - W_{\Pi^g (\tau)}(\tau) < h(W^*(t)) \Big| \tau =  t'_k \right\}  \nonumber \\
\stackrel{(b)}{\geq}& 1 - \frac{\bbbe^{\pi} \left[W^*(\tau) - W_{\Pi^g (\tau)}(\tau) \right] }{h(W^*(t))} 
\stackrel{(c)}{\geq} 1 - \frac{G}{h(W^*(t))}
\label{prob_zero_reconfiguration}
\end{align}
where $(b)$ follows from the conditional Markov's inequality, and $(c)$ follows from Condition~\ref{expected_bound}.

Notice that either there is no reconfigurations within $[t_k, t_{k+1}]$ or otherwise the first reconfiguration occurs at some $\tau \in [t_k, t_{k+1}]$, we thus have
\[
\Pr \left\{ Z_{[t_k, t_{k+1}]} = 0 \Big| \mathbf{L}(t) \right\} + \sum_{\tau = t_k}^{t_{k+1}-1} \Pr \left\{ \tau = t'_k \Big| \mathbf{L}(t) \right\} = 1
\]
hence by (\ref{prob_one_reconfiguration}) and (\ref{prob_zero_reconfiguration}) we have that
\begin{align}
&\Pr \left\{ Z_{[t_k, t_{k+1}]} \leq 1 \Big| \mathbf{L}(t) \right\}  \nonumber \\
=& \Pr \left\{ Z_{[t_k, t_{k+1}]} = 0 \Big| \mathbf{L}(t) \right\} \nonumber \\
&+ 
\sum_{\tau = t_k}^{t_{k+1}-1} \Pr \left\{ Z_{[\tau, t_{k+1}]} = 0  \Big| \mathbf{L}(t), \tau =  t'_k  \right\} \Pr \left\{ \tau = t'_k \Big| \mathbf{L}(t) \right\}  \nonumber \\
\geq& \Pr \left\{ Z_{[t_k, t_{k+1}]} = 0 \Big| \mathbf{L}(t) \right\}  \nonumber \\
&+ \sum_{\tau = t_k}^{t_{k+1}-1} \left( 1 - \frac{G}{h(W^*(t))} \right) \Pr \left\{ \tau = t'_k \Big| \mathbf{L}(t) \right\}  \nonumber \\
\geq&  1 - \frac{G}{h(W^*(t))}
\label{prob_less_than_one}
\end{align}

With the following bound which is obvious by definition:
\begin{align*}
\sum_{t = t_k}^{t_{k+1}-1} \mathds{1}_{\{t \in \mathbf{R}\}} &\leq \left\{ 
\begin{array}{ll} 
\Delta_r, & \mbox{if } Z_{[t_k, t_{k+1}]} \leq 1 \\
T, & \mbox{if } Z_{[t_k, t_{k+1}]} > 1 \\
\end{array} 
\right. ,
\end{align*}
along with (\ref{prob_less_than_one}), we then have the bound on the expected schedule reconfiguration delay within the interval $[t_k, t_{k+1}]$:
\begin{align*}
\bbbe^{\pi^g} \left[  \sum_{t = t_k}^{t_{k+1}-1} \mathds{1}_{\{t \in \mathbf{R}\}}  \Big| \mathbf{L}(t_k) \right] \leq \Delta_r + \frac{TG}{h(W^*(t_k))}
\end{align*}
\end{proof}

\bibliographystyle{ieeetr}
\bibliography{reference.bib}

\end{document}